\newtheorem{lemma}{Lemma}
\newtheorem{corollary}{Corollary}
\newcounter{ct}
\def\@eqnnum{{\normalsize \normalcolor (\theequation)}} 
\newenvironment{breakablealgorithm}
  {%
   \begin{center}
     \refstepcounter{algorithm}
     \hrule height.8pt depth0pt \kern2pt
     \renewcommand{\caption}[2][\relax]{%
       {\raggedright\textbf{Algorithm~\thealgorithm.} ##2\par}%
       \ifx\relax##1\relax 
         \addcontentsline{loa}{algorithm}{\protect\numberline{\thealgorithm}##2}%
       \else 
         \addcontentsline{loa}{algorithm}{\protect\numberline{\thealgorithm}##1}%
       \fi
       \kern2pt\hrule\kern2pt
     }
  }{%
     \kern2pt\hrule\relax
   \end{center}
  }
\begin{document}

\title{Vehicular Multistatic OTFS-ISAC: A Geometry- Aware Deployment and Kalman-Based Tracking}
\author{Jyotsna Rani, \textit{Student Member, IEEE}, Kuntal Deka, \textit{Member, IEEE}, Ganesh Prasad, \textit{Member, IEEE} and Zilong Liu, \textit{Senior Member, IEEE}.

	\thanks{J. Rani and K. Deka are with the Department of Electronics \& Electrical Engineering, Indian Institute of Technology Guwahati, India (e-mail: \{r.jyotsna, kuntaldeka\}@iitg.ac.in).
		
	G. Prasad is with the Department of Electronics and Communication Engineering, National Institute of Technology Silchar, India (e-mail: gpkeshri@ece.nits.ac.in).

     Z. Liu is with the School of Computer Science and Electronics Engineering, University of Essex, Colchester, UK (e-mail: zilong.liu@essex.ac.uk)
    
    }}

\maketitle
\begin{abstract}
Integrated sensing and communication (ISAC) is a promising paradigm for next-generation vehicular networks, yet existing orthogonal frequency-division multiplexing (OFDM)-based designs suffer from limited spatial diversity and severe sensitivity to Doppler and multipath effects. While orthogonal time-frequency space (OTFS) modulation offers robustness under high mobility, the impact of spatial node deployment in multistatic OTFS-ISAC has remained largely unexplored. This paper presents the first geometry-aware multistatic OTFS-ISAC framework, in which a triangulation-based cooperative sensing approach is developed for joint target localization and velocity estimation. Closed-form expressions for the localization error covariance are derived under general receiver topologies, revealing that maximizing the triangulation area is fundamental to minimizing estimation error. This leads to a near-optimal deployment strategy based on orthogonal receiver placement and its equivalence to multi-antenna architectures with cubic-order error reduction. To enable reliable tracking of moving targets, a correlated random walk (CRW)-based Kalman filter (KF) framework is integrated into multistatic OTFS-ISAC for active sensing and ISAC. Numerical results demonstrate significant reductions in localization root-mean-square error (RMSE) and communication bit error rate (BER), highlighting the effectiveness of geometry-aware, KF-assisted multistatic OTFS-ISAC in dynamic vehicular environments.
\end{abstract}
   \vspace{-2mm}

\begin{IEEEkeywords}
	Integrated sensing and communication, orthogonal time-frequency space modulation, distributed networks, Kalman filtering.
\end{IEEEkeywords}
\vspace{-2mm}

\section{Introduction}
The advancement of vehicular networks toward intelligent transportation is driven by vehicle-to-everything (V2X) communication, which enables information exchange among vehicles, infrastructure, and pedestrians. This evolution has created demands for wireless technologies capable of delivering communication and real-time sensing \cite{noor24}. Integrated sensing and communication (ISAC) has emerged as a solution by unifying hardware and spectrum resources for simultaneous data transmission and environmental sensing. This joint design not only enhances spectral-efficiency but also simplifies the system implementation \cite{FLiuITC2020JRCDASOARA}. The  orthogonal frequency division multiplexing (OFDM) waveform is sensitive to Doppler and timing synchronization error in high-mobility channels and hence may not be capable. On the other hand, orthogonal time frequency space (OTFS) transmits information symbols into the delay-Doppler (DD) domain, whereby the effective channel matrix is quasi-static and sparse. Because of this, OTFS is well-suited for ISAC in vehicular networks as it can support high-resolution sensing and communication performances \cite{RHadaniWCNC2017OTFS,yua14}. However, the main challenges for OTFS-based ISAC are tracking moving vehicles, keeping communication reliable under strong Doppler, estimating speeds of multiple targets, and achieving accurate localization in multipath conditions under limited resources. Large co-located antenna arrays can provide high localization accuracy but incur substantial hardware and computational costs due to high-dimensional DD domain processing, multi-dimensional fast Fourier transform (FFT) operations, and exhaustive DD parameter estimation. Multistatic ISAC overcomes these issues by coordinating spatially separated low-complexity nodes, offering broad coverage, cooperative sensing, and robust communication while keeping per-node complexity low \cite{han04}. In this work, we consider multistatic OTFS-based ISAC along with Kalman filtering (KF) to enhance sensing and communication performance.

\vspace{-1mm}
\subsection{Related Works} 
\subsubsection{OTFS-Based ISAC}
A major advantage of OTFS is that it can capture the channel dynamics caused by scatterer geometry to provide robustness against Doppler impairments and multipath fading in high-mobility settings such as vehicular networks, compared to OFDM. In \cite{yua14}, the quasi-static DD channels were used to infer communication states from sensing parameters, achieving reliable transmissions under dynamic conditions. Further, the authors in \cite{wu19} advanced this study by formulating a DD-domain precoder that jointly minimizes BER under the constraint of transmit power and Cramér–Rao Bound (CRB). In \cite{zac16}, practical resolution limits were addressed by introducing a low-complexity orthogonal matching pursuit with fractional refinement (OMPFR) algorithm for fractional DD estimation. Precise target profiling in monostatic OTFS-ISAC were enabled without excessive complexity. When extended to multistatic settings, OTFS can unlock greater performance gains by combining spatial diversity with DD-domain processing. In \cite{fan17}, power allocation strategies for cell-free massive multiple-input–multiple-output (CF-mMIMO) OTFS-ISAC were designed for significant SINR and spectral efficiency improvements over OFDM. Further, authors in \cite{sruti18} developed sparse recovery-based association and localization methods for multistatic OTFS-ISAC, yielding scalable and bandwidth-efficient solutions for vehicular scenarios. On contrary, the work in \cite{zeg25} leveraged OTFS pilot sidelobes to achieve super-resolution range-velocity profiling, surpassing conventional time-bandwidth limits in automotive radar. Thus, the OTFS-enabled multistatic ISAC achieves robust and scalable integration of communication and sensing.

\vspace{-0.5mm}
\subsubsection{Monostatic ISAC}
Monostatic ISAC systems operate through co-located transmitter and receiver by utilizing a clock for synchronization and knowledge of the transmitted signals, thereby eliminating coordination and hardware overhead requried by the multi-node systems~\cite{su05,de06}. 
Many works have applied OFDM-based monostatic ISAC together with MIMO arrays, showing that such systems can provide range, angle, and velocity estimation~\cite{su05,zh08}. 
In vehicular environments, monostatic ISAC is desired since it is compact, cost-efficient, and compatible with V2X standards~\cite{de06,che07}. But, the need for full duplex technology to handle self-interference, monostatic systems rely on one observation point only, thus suffering from compromised spatial diversity and robustness. This limitation becomes more critical in high-mobility environments, where Doppler and multipath effects make reliable target tracking difficult to achieve.

\vspace{-0.5mm}
\subsubsection{Multistatic ISAC}
Multistatic ISAC has emerged as an approach to overcome inherent limitations of monostatic systems by allowing multiple base stations, user devices, or vehicles to cooperate for ISAC. This cooperation increases spatial diversity and enhances localization accuracy, which are important in high-mobility environments. Earlier works on uplink cooperative ISAC focused on waveform and resource allocation, showing that joint optimization could reduce the Cramér--Rao lower bound (CRLB) for sensing while ensuring communication quality~\cite{li09}. In vehicular networks, cooperative time-division ISAC frameworks demonstrated that sharing sensing data among automated vehicles reduces detection errors and latency~\cite{zh10}. At the network level, multistatic ISAC schemes based on coordinated multi-point (CoMP) transmission and multi-static radar reported scaling gains in sensing accuracy as the number of cooperating transceivers increases~\cite{hon11,men12}. Mutual information-based sensing metrics were introduced to unify communication-sensing trade-offs, effectively modeling multistatic ISAC as a virtual MIMO system~\cite{yin13}. Besides, KF was employed in \cite{sag26} to recursively estimate target states by fusing noisy position--velocity measurements with an assumed motion model. In multistatic OTFS-ISAC, this approach was essential for mitigating mobility- and geometry-induced uncertainties, thereby improving robustness of moving-target sensing and reliability of the associated communications. 


\vspace{-2mm}
\subsection{Motivation and Key Contributions}  
 Majority of the multistatic OTFS-ISAC studies have focused  on waveform design rather than node deployment. Since deployment geometry affects localization accuracy, sensing coverage, and communication reliability, optimal placement remains an open problem. Moreover, integration of KF with geometry-aware multistatic OTFS-ISAC for tracking moving targets is unexplored. The key contributions of the work are:
\begin{itemize}
    \item We introduce topology of distributed ISAC framework with  OTFS transform waveform and receive signal processing. We then present methods for active sensing and ISAC to estimate target range, radial velocity, and achievable data rate at receivers. Based on system topology, we propose a geometrical triangulation approach that exploits spatial diversity to improve accuracy of target localization and velocity estimation. Using this spatial diversity, we describe two methods to improve estimation of targets' location and velocity.
    \item Next, we describe a lemma that  highlights role of triangulation area in reducing localization error and show optimizing this area is essential for an accurate estimation. A corollary demonstrates that when receivers are placed along orthogonal axes, estimation error can be minimized by maximizing the triangulation area. We further derive closed-form expressions for trace and the maximum eigenvalue of covariance matrix of the estimated target locations under a general topology with randomly deployed receivers. Using these results, we prove in a subsequent lemma that a suboptimal yet effective topology is obtained when receivers are positioned along orthogonal axes, which is equivalent to a configuration of three nodes equipped with multiple independent antennas. Additionally, we show that this equivalent topology reduces the estimation error by a factor proportional to the cubic power of number of independent antennas per node.
    \item For moving vehicles, we employ KF to improve accuracy of state estimation. The prediction model is derived from a correlated random walk (CRW), which is converted to discrete-time domain for tractable implementation. Based on this, we present a KF-based algorithm that estimates states of the targets, specifically their position and velocity. We extend application of KF to develop a second algorithm aimed at enhancing active sensing performance by refining prediction and correction steps. Furthermore, we propose a third algorithm that leverages KF for joint passive sensing and communication, thereby integrating tracking and information exchange within a unified scheme. An analysis of time complexity of proposed algorithms is provided.

    \item Finally, proposed geometry-aware multistatic OTFS-ISAC framework is validated through extensive numerical evaluations. The results show that cooperative triangulation substantially reduces localization root mean square error (RMSE) compared with monostatic and randomly deployed multistatic schemes, with orthogonal receiver placement achieving minimum error by maximizing effective triangulation area, in agreement with derived error covariance analysis. The integration of a CRW-based KF further improves both localization and velocity estimation accuracy for moving targets, yielding faster convergence and lower steady-state RMSE in active and passive sensing modes. In joint passive sensing and communication scenario, the KF-assisted sensing-aided channel reconstruction leads to improved BER performance relative to pilot-only ISAC schemes. Comparative simulations across multiple benchmark configurations clearly illustrate trade-offs between deployment geometry, sensing accuracy, and communication reliability. These results consistently demonstrate performance gains of proposed geometry-optimized, KF-assisted multistatic OTFS-ISAC framework.

\end{itemize}
\vspace{-1mm}
\subsubsection*{Notations}
In this paper, bold lowercase and uppercase letters denote vectors and matrices, respectively. $\mathbb{C}^{M\times N}$ denotes the set of $M \times N$ complex-valued matrices, and $\mathbb{E}\{\cdot\}$ denotes statistical expectation. The operators $\otimes$, $\mathrm{vec}(\cdot)$, and $\mathrm{vec}^{-1}(\cdot)$ denote the Kronecker product, vectorization, and inverse vectorization, respectively. The absolute value and Frobenius norm are denoted by $|\cdot|$ and $\lVert\cdot\rVert$, while $\mathrm{card}(\cdot)$ gives set cardinality. The trace, Hermitian transpose, diagonal, covariance, and maximum eigenvalue operators are denoted by $\mathrm{tr}(\cdot)$, $(\cdot)^H$, $\mathrm{diag}\{\cdot\}$, $\mathrm{cov}(\cdot,\cdot)$, and $\kappa_{\max}(\cdot)$, respectively. $\mathcal{O}(\cdot)$ characterizes asymptotic computational complexity.

\vspace{-2mm}
\section{System Architecture}
\begin{figure}[!t]
\centering
\includegraphics[width=2.5in]{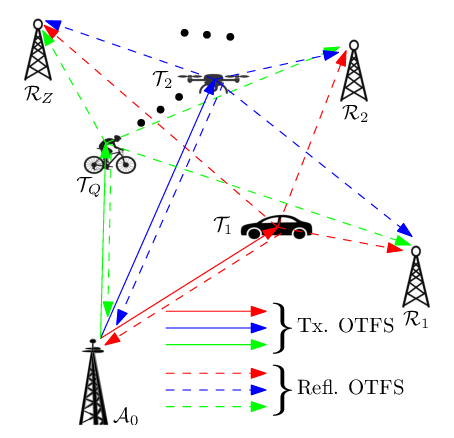}
\caption{Network topology for multistatic ISAC.}
\label{fig:ISAC_OTFS_Arch}
\end{figure}

\vspace{-1mm}
\subsection{Topology}
As shown in Fig.~\ref{fig:ISAC_OTFS_Arch}, the system uses a multistatic ISAC setup which includes one anchor node (AN), called $\mathcal{A}_0$, and several receiver nodes $\{\mathcal{R}_1, \ldots, \mathcal{R}_Z\}$ placed in different locations\footnote{A coplanar (2D) model is adopted as vehicular nodes and targets lie primarily on the road plane, rendering elevation effects negligible; the framework readily extends to 3D by incorporating elevation without altering the core signal processing or estimation methodology.}. The system works in two modes: active and passive sensing. In both modes, it uses monostatic and bistatic sensing. In monostatic sensing, the anchor node $\mathcal{A}_0$ senses the environment using the echoes of its own signals reflected from the targets $\{\mathcal{T}_1, \ldots, \mathcal{T}_Q\}$. In bistatic sensing, the anchor sends signals, and the receiver node $\{\mathcal{R}_z$; $z\in\{1,\cdots,Z\}\}$ collects the echoes reflected from the targets. This allows the anchor and receivers to work together as pairs. In the active sensing mode, the anchor transmits known probing waveforms, which are also available at the receiver nodes, thereby enabling cooperative bistatic sensing. In the passive sensing mode, the anchor transmits standard communication signals embedded with pilots, and the receivers exploit these signals for simultaneous information decoding and environmental sensing without emitting any waveform for sensing. Thus, it reduces the energy consumption and minimizes the interference. Since the propagation environment is doubly spread in delay and Doppler, the transmit signal is generated using OTFS modulation, as described next.

\vspace{-1mm}
\subsection{Transmit OTFS Signal}
In OTFS signaling, information data sequence, generated from quadrature amplitude modulation (QAM) alphabet $\mathbb{A} \in \mathbb{C}$, represented by vector $\mathbf{d}_I \in \mathbb{A}^{MN}$ is reshaped into a DD domain matrix $\mathbf{D}_I \in \mathbb{A}^{M \times N}$. Transformation is carried out by dividing $\mathbf{d}_I$ into $N$ consecutive segments of length $M$, with each segment forming one column of matrix $\mathbf{D}_I$. The parameters $M$ and $N$ correspond to number of delay and Doppler bins, respectively. To facilitate ISAC, a pilot matrix $\mathbf{D}_P \in \mathbb{C}^{M \times N}$ is introduced along with data matrix. The indices for delay and Doppler bins in both $\mathbf{D}_I$ and $\mathbf{D}_P$ are denoted by $l \in \{0,\cdots, M-1\}$ and $k \in \{0, \cdots, N-1\}$, respectively. Note that total bandwidth and duration of OTFS signal is $M\Delta f$ and $NT$. Here, $\Delta f$ denotes subcarrier spacing in time-frequency grid, $T$ is the OTFS symbol duration, and $T\Delta f = 1$. Therefore, resolutions of delay and Doppler shift are $1/M\Delta f$ and $1/NT$. Further, indices $l$ and $k$ correspond to delay $l/M\Delta f$ and Doppler shift $k/NT$. The pilot matrix $\mathbf{D}_P$ is designed to have a single nonzero entry at location $(l_p, k_p)$, i.e., $\mathbf{D}_P(l,k) = \sqrt{MN \sigma_P^2}$ if $(l, k) = (l_p, k_p)$, and zero elsewhere. This ensures that average pilot power is maintained as $\mathbb{E}\{|\mathbf{D}_P(l,k)|^2\} = \sigma_P^2$. To further enhance reliability of pilot-based sensing, a guard band is allocated around pilot position in delay--Doppler domain to suppress interference caused by channel's maximum delay and Doppler spread. Such an isolation region prevents leakage from surrounding data symbols, especially under practical pulse-shaping and fractional Doppler effects, thereby preserving pilot’s orthogonality and improving estimation accuracy in ISAC operation. The resultant data ISAC is given by $\mathbf{D} = \mathbf{D}_I + \mathbf{D}_P$. Next, DD data is converted into time-frequency domain using an operation called inverse symplectic finite Fourier transform (ISFFT). This is done by multiplying data with two Fourier transform matrices: $\mathbf{X} = \mathbf{F}_M \, \mathbf{D} \, \mathbf{F}_N^H$, where $\mathbf{F}_M$ and $\mathbf{F}_N$ are standard FFT matrices. Thereafter, time-frequency signal is converted to time domain using an inverse FFT, and then passed through a pulse-shaping filter $p(t)$ of duration $T$ which is denoted in diagonal matrix form as $\mathbf{P}_{\text{tx}} = \text{diag}\{p_{\text{tx}}(0), p_{\text{tx}}(1),\cdots,p_{\text{tx}}(M-1)\}$ using its samples. This filter helps to shape signal for transmission through wireless channel. Thus OTFS transmit signal $\mathbf{S}$ in matrix form is

\begin{align}
    \mathbf{S} = \mathbf{P}_\text{tx} \, \mathbf{F}_M^H \left( \mathbf{F}_M \, \mathbf{D} \, \mathbf{F}_N^H \right).
\end{align}
Further, the resulting time-domain signal vector $\mathbf{s}\in \mathbb{C}^{MN \times 1}$ is obtained by stacking the columns of $\mathbf{S}$ as
\begin{align}\label{eq:d2s}
    \mathbf{s} = \left( \mathbf{F}_N^H \otimes \mathbf{P}_\text{tx} \right) \mathbf{d},
\end{align}
where the vector $\mathbf{d}=\mathrm{vec}(\mathbf{D}) \in \mathbb{C}^{MN \times 1}$.

\vspace{-2mm}
\subsection{Received Signal Processing}
In a doubly-dispersive channel, the received signal $\mathbf{r}_j$ at the $j$th receiver is the superposition of the signals via the $P$ paths, where each path introduces a gain, cyclic delay, and Doppler shift. The channel matrix $\mathbf{H}_j$ for the $j$th receiver is given by
\begin{align}\label{eq:H_ch}
    \mathbf{H}_j = \sum_{q=1}^P h_{q,j} \, \mathbf{\Pi}^{l_{q,j}}\mathbf{\Delta}^{k_{q,j}}.
\end{align}
Here, $h_{q,j}$ is the channel coefficient and $l_{q,j}$, $k_{q,j}$ are the indices of the delay and Doppler shift of the $q$th path. $\mathbf{\Pi}^{l_{q,j}}$ is $l_{q,j}$-step forward cyclic shift to the permutation matrix $\mathbf{\Pi}$ as given in \cite[eq. (9)]{rav01}. This parameter gives $l_{q,j}$-step cyclic shift to the transmit signal $\mathbf{s}$ in the $q$th path. On the other hand $\mathbf{\Delta}^{k_{q,j}}$ gives the Doppler shift by frequency $2\pi k_{q,j}/MN$, where $\mathbf{\Delta}=\mathrm{diag}\{c^0,c^1,\cdots,c^{MN-1}\}$, where $c=\mathrm{e}^{i2\pi/MN}$. Here both the matrices $\mathbf{\Pi}$ and $\mathbf{\Delta}$ are of size $MN\times MN$. Using \eqref{eq:H_ch}, the received signal $\mathbf{r}_j$ is expressed as
\begin{align}\label{eq:r_j}
    \mathbf{r}_j = \mathbf{H}_j\mathbf{s} + \mathbf{n}_j,
\end{align}
where $\mathbf{n}_j\in \mathcal{CN}(\mathbf{0},\sigma_j^2\mathbf{I}_{MN})$. In order to perform the sensing and communication using the received signal $\mathbf{r}_j$, we first convert it from the time domain to DD domain using the inverse process of obtaining $\mathbf{s}$ from $\mathbf{d}$ (cf. \eqref{eq:d2s}), which is described as follows. The time domain signal in matrix form is obtained as: $\mathbf{R}_j=\mathrm{vec}^{-1}(\mathbf{r}_j)$. Then, it is passed through a receiver filter $\mathbf{P}_\text{rx}=\mathrm{diag}\{p_\text{rx}(0), p_\text{rx}(1),\cdots,p_\text{rx}(M-1)\}$ which is generated using sampling of pulse-shaping waveform $p_\text{rx}(t)$ at the receiver. Then, delay-Doppler domain signal $\mathbf{Y}_j$ is obtained by performing $M$-point FFT followed by SFFT on it which is represented as: $\mathbf{Y}_j = \mathbf{F}_M^H(\mathbf{F}_M \mathbf{P}_{rx} \mathbf{R}_j)\mathbf{F}_N=\mathbf{P}_{rx} \mathbf{R}_j\mathbf{F}_N$. The corresponding vector $\mathbf{y}_j$ is obtained as
\begin{subequations}
\begin{align}\label{eq:y_a}
    \mathbf{y}_j& = (\mathbf{F}_N \otimes \mathbf{P}_{rx})\mathbf{r}_j,\\\label{eq:y_b}
                & = \overline{\mathbf{H}}_j\mathbf{d} + \overline{\mathbf{n}}_j,
\end{align}
\end{subequations}
where \eqref{eq:y_b} is obtained from \eqref{eq:y_a} by substituting $\mathbf{r}_j$, $\mathbf{s}$, and $\mathbf{H}_j$ from \eqref{eq:r_j}, \eqref{eq:d2s}, and \eqref{eq:H_ch}. Thus, $\overline{\mathbf{H}}_j = \sum_{q=1}^P h_{q,j}\mathcal{T}(l_{q,j},k_{q,j})$ and $\mathcal{T}(l_{q,j},k_{q,j}) = (\mathbf{F}_N \otimes \mathbf{P}_{rx}) \mathbf{\Pi}^{l_{q,j}}\mathbf{\Delta}^{k_{q,j}} (\mathbf{F}_N^H \otimes \mathbf{P}_{tx})$. Note that $\mathbf{P}_{tx} = \mathbf{P}_{rx} = \mathbf{I}_M$ for rectangular pulse-shaping filter. The noise $\overline{\mathbf{n}}_j = (\mathbf{F}_N \otimes \mathbf{P}_{rx})\mathbf{n}_j$ and it can be shown that it follows the Gaussian distribution $\overline{\mathbf{n}}_j\in \mathcal{CN}(\mathbf{0},\sigma_j^2\mathbf{I}_{MN})$ for the normalized unitary matrix $\mathbf{F}_N$.  

\vspace{-2mm}
\subsection{Active Sensing Procedure}\label{sec:act_sense}
In active sensing, the transmit data vector $\mathbf{d}$ is assumed to be known at all receiving nodes \cite{gau20}. Utilizing this prior knowledge, the targets' range and radial velocity relative to the $j$th receiver can be determined by estimating the set of parameters $\bm{h}_{j}$, $\bm{\tau}_j$, and $\bm{\nu}_j$, which represent the complex gains, propagation delays, and Doppler shifts, respectively. These parameters are obtained by solving a maximum likelihood estimation problem. Under the assumption of additive Gaussian noise at the receivers, the log-likelihood function simplifies to a least-squares minimization problem, where the received delay--Doppler vector $\mathbf{y}_j$ is compared with its expected value $\overline{\mathbf{H}}_j\mathbf{d}$. This leads to the following estimator formulation:
\begin{align}\label{eq:est_min}
(\widehat{\mathbf{h}}_j,\widehat{\bm{\tau}}_j,\widehat{\bm{\nu}}_j) = 
\arg\min_{(\mathbf{h}_j,\bm{\tau}_j,\bm{\nu}_j)} 
\left\lVert 
\mathbf{y}_{j}-\sum_{q=1}^P h_{q,j}\mathcal{T}(\tau_{q,j},\nu_{q,j})\mathbf{d} 
\right\rVert_{F}^2
\end{align}
where,  $\mathcal{T}(\tau_{q,j},\nu_{q,j})$ denotes the combined delay and Doppler shift operator acting on the transmit signal. The problem in \eqref{eq:est_min} is nonconvex due to the discrete nature of the delay and Doppler parameters and is computationally intensive because the search space spans the high-dimensional domain $\mathbb{C}^P \times \mathbb{R}_+^P \times \mathbb{R}^P$. To mitigate this complexity, we adopt a sequential estimation strategy. In the first step, the parameters of the dominant path delay $\tau_{1,j}$ and Doppler shift $\nu_{1,j}$ are obtained by maximizing the correlation between the received signal and the delayed--Doppler shifted transmit waveform:
\begin{align}\label{eq:est_first}
(\widehat{\tau}_{1,j},\widehat{\nu}_{1,j}) 
= \arg\max_{(\tau_{1,j},\nu_{1,j})} 
\left| 
\left(\mathcal{T}(\tau_{1,j},\nu_{1,j})\mathbf{d}\right)^H \mathbf{y}_j 
\right|^2.
\end{align}
Given $(\widehat{\tau}_{1,j},\widehat{\nu}_{1,j})$, gain $h_{1,j}$ is estimated by minimizing residual error, which leads to
\begin{align}\label{eq:first_gain}
\widehat{h}_{1,j} = 
\frac{
\left(\mathcal{T}(\widehat{\tau}_{1,j},\widehat{\nu}_{1,j})\mathbf{d}\right)^H \mathbf{y}_j
}{
\mathbf{d}^H 
\mathcal{T}(\widehat{\tau}_{1,j},\widehat{\nu}_{1,j})^H 
\mathcal{T}(\widehat{\tau}_{1,j},\widehat{\nu}_{1,j}) 
\mathbf{d}
}.
\end{align}
To ensure reliable detection of subsequent reflections, the interference generated by previously detected paths must be suppressed before estimating the remaining components. This interference cancellation step substantially improves the estimator's robustness, particularly in multipath-rich vehicular environments, by preventing strong paths from masking weaker ones. Subsequent paths are estimated successively. For the estimation of the $i$th path parameters, the contribution from previously estimated paths is subtracted from the received signal, and the same correlation-maximization and gain estimation procedure is applied as:
\begin{subequations}\label{eq:ith_est}
    \begin{align}
        (\widehat{\tau}_{i,j},\widehat{\nu}_{i,j}) &= \arg\max_{(\tau_{i,j},\nu_{i,j})}\left| (\mathcal{T}(\tau_{i,j},\nu_{i,j})\mathbf{d})^H\left(\mathbf{y}_j \right.\right.\\\nonumber
        &\;\;\left.\left.-\sum_{q=1}^{i-1}\widehat{h}_{q,j}\mathcal{T}(\widehat{\tau}_{q,j},\widehat{\nu}_{q,j})\mathbf{d}\right) \right|^2,\\\label{eq:h_est}
        \widehat{h}_{i,j} &= \frac{(\mathcal{T}(\widehat{\tau}_{i,j},\widehat{\nu}_{i,j})\mathbf{d})^H\mathbf{y}_j}{\mathbf{d}^H\mathcal{T}(\widehat{\tau}_{i,j},\widehat{\nu}_{i,j})^H\mathcal{T}(\widehat{\tau}_{i,j},\widehat{\nu}_{i,j})\mathbf{d}}.
    \end{align}
\end{subequations}
\subsubsection*{Complexity}
The computational cost of the sequential estimator is dominated by the two-dimensional delay-Doppler correlation search. For each delay-Doppler pair, the correlation involves an inner product of length $MN$ that gives the complexity of $\mathcal{O}(G_\tau G_\nu MN)$ per path at a single receiver, where $G_\tau$ and $G_\nu$ denote the grid sizes in delay and Doppler, respectively. The gain estimation step requires only a few additional inner products of length $MN$, i.e., $\mathcal{O}(MN)$, and is negligible compared with the search stage. Repeating the procedure for $P$ paths results in a per-receiver complexity of $\mathcal{O}(P G_\tau G_\nu MN)$. Considering all $Z+1$ receivers, the overall complexity is $\mathcal{O}((Z+1)P G_\tau G_\nu MN)$. Thus, the overall complexity grows linearly with the number of paths, the signal dimension, and the number of receivers, but quadratically with the resolution of the delay-Doppler grid, with the correlation search being the dominant factor.

\vspace{-0.5mm}
\subsubsection*{Computation of Range and Velocity}
	Finally, the self-reflected signal is received by receiver $\mathcal{A}_0$, whereby the range $\rho_{i,0}$ and the radial velocity $v_{i,0}$ of the $i$th target can be computed as in \eqref{eq:est_self} below. For receiver $\mathcal{R}_j$, the range $\rho_{i,j}$ and radial velocity $v_{i,j}$ of the $i$th target are given in \eqref{eq:est_ref}.
	\begin{subequations}\label{eq:est_tar_para}
		\begin{align}\label{eq:est_self}
			&\widehat{\rho}_{i,0} = \widehat{\tau}_{i,0}c/2;\;\; \widehat{v}_{i,0} = \widehat{\nu}_{i,0}c/2f_c,\\\label{eq:est_ref}
			& \widehat{\rho}_{i,j} = \widehat{\tau}_{i,j}c-\widehat{\rho}_{i,0}; \;\; \widehat{v}_{i,j} = (\widehat{\nu}_{i,j} - \widehat{\nu}_{i,0})c/f_c.
		\end{align}
	\end{subequations}

\vspace{-2mm}
\subsection{ISAC Procedure}
In ISAC, only the pilot signal is assumed to be known at all receivers. Consequently, the channel, characterized by its gain, delay, and Doppler shift components, is first estimated. Based on this initial channel estimate, the information data is subsequently detected. Utilizing both the estimated data and the known pilot signal, the channel is re-estimated in the next iteration. This iterative process continues until convergence is achieved. For an initial (coarse) estimate of the channel, the delay-Doppler matrix of the pilot signal, denoted as $\mathbf{D}_P$, is vectorized as $\mathbf{d}_P = \text{vec}(\mathbf{D}_P)$. Using it, each receiver estimates the channel while treating the transmitted information data $\mathbf{d}_I$ as interference. Analogous to \eqref{eq:est_min}, the channel coarse estimator at the $j$-th receiver is expressed as
\begin{align}\label{eq:est_pilot_min}(\widehat{\mathbf{h}}_j^{(0)},\widehat{\bm{\tau}}_j^{(0)},\widehat{\bm{\nu}}_j^{(0)})\hspace{-0.5mm} = \hspace{-0.5mm}\arg\hspace{-0.5mm}\min_{(\mathbf{h}_j,\bm{\tau}_j,\bm{\nu}_j)}\hspace{-0.5mm} \left\lVert \mathbf{y}_{j}\hspace{-0.5mm}-\hspace{-0.5mm}\sum_{q=1}^P\hspace{-0.5mm} h_{q,j}\hspace{-0.5mm}\mathcal{T}(\tau_{q,j},\hspace{-0.5mm}\nu_{q,j})\mathbf{d}_P \right\rVert_{F}^2,
\end{align}
The problem in \eqref{eq:est_pilot_min} can be solved using the sequential estimation strategy as described in \eqref{eq:est_first}, \eqref{eq:first_gain}, and \eqref{eq:ith_est}. However, this initial coarse estimation is inherently less accurate than in the active sensing scenario, where both the pilot $\mathbf{d}_P$ and the information $\mathbf{d}_I$ are known to all receivers. To improve estimation accuracy in this passive setting, we adopt an iterative refinement algorithm as outlined below. Starting with the coarse channel estimate $(\widehat{\mathbf{h}}_j^{(0)}, \widehat{\bm{\tau}}_j^{(0)}, \widehat{\bm{\nu}}_j^{(0)})$, the signal $\mathbf{d}$ is recovered using a channel equalizer. This step is formulated as a regularized least-squares optimization problem:
\begin{align}\label{eq:reg_data_est}
    \min_{\mathbf{d}} \left\lVert \mathbf{y}_j - \overline{\mathbf{H}}_j\mathbf{d} \right\rVert_{F}^2 + \mu \left\lVert \mathbf{d} \right\rVert_{F}^2,
\end{align}
where $\mu$ is a regularization parameter that ensures numerical stability, especially when the estimated channel matrix $\overline{\mathbf{H}}_j$ is ill-conditioned or rank-deficient. Without regularization, a nearly singular $\overline{\mathbf{H}}_j$ can lead to solutions that are overly sensitive to noise, resulting in overfitting. The regularization term $\mu \lVert \mathbf{d} \rVert^2$ mitigates this effect by penalizing large values in the solution, effectively reducing the bias toward noise-induced errors. Based on the minimum mean square error (MMSE), the closed-form solution to \eqref{eq:reg_data_est} is given by:
\begin{align}\label{eq:closed_sol}
    \widehat{\mathbf{d}}_{\text{MMSE}} = (\overline{\mathbf{H}}_j^H\overline{\mathbf{H}}_j + \mu\mathbf{I})^{-1}\overline{\mathbf{H}}_j^H\mathbf{y}_j.
\end{align}
However, for large channel matrix $\overline{\mathbf{H}}_j$, it is complex to solve using \eqref{eq:closed_sol} due to computation of matrix inverse. Therefore, we consider gradient descent-based iterative method where gradient $\nabla J(\mathbf{d})$ in $t$th iteration and update rule for $(t+1)$th iteration are given below in \eqref{eq:gradient} and \eqref{eq:update_rule}.
\begin{subequations}
\begin{align}\label{eq:gradient}
    \nabla J(\mathbf{d})^{(t)} &= 2\overline{\mathbf{H}}_j^H(\overline{\mathbf{H}}_j\mathbf{d}^{(t)} - \mathbf{y}_j) + 2\mu\mathbf{d}^{(t)},\\\label{eq:update_rule}
    \mathbf{d}^{(t+1)} &= \mathbf{d}^{(t)} - \eta\nabla J(\mathbf{d})^{(t)},
\end{align}
\end{subequations}
where, $\eta$ denotes the step size used in the gradient descent updates. The inner-loop iteration for data estimation continues until convergence which is defined by the stopping criterion: $\lVert \mathbf{d}^{(t+1)} - \mathbf{d}^{(t)} \rVert < \epsilon$, where $\epsilon$ is a pre-defined threshold that controls the convergence precision. Once the convergence is achieved and the final estimate of the data vector is obtained as $\mathbf{d}'$, it is transformed back into the delay-Doppler domain via inverse vectorization: $\mathbf{D}' = \text{vec}^{-1}(\mathbf{d}')$. Subsequently, the estimated information component in the delay-Doppler domain is extracted by subtracting the known pilot matrix: $\mathbf{D}_I' = \mathbf{D}' - \mathbf{D}_P$.
To recover the transmitted information symbols, a symbol-wise demodulation is performed. Specifically, the final estimate of the information data $\widehat{\mathbf{D}}_I$ is obtained by solving the following nearest neighbor projection problem:
\begin{align}\label{eq:symb_demod}
\widehat{\mathbf{D}}_I = \arg\min_{\mathbf{D}_I \in \mathbb{A}^{M \times N}} \lVert \mathbf{D}_I - \mathbf{D}_I' \rVert_{F}^2,
\end{align}
where $\mathbb{A}$ denotes the QAM modulation alphabet. With both the pilot $\mathbf{D}_P$ and the newly estimated information data $\widehat{\mathbf{D}}_I$ now available at the receiver, a refined channel estimation can be carried out using the active sensing framework described in Section~\ref{sec:act_sense}. In this stage, the full delay-Doppler matrix $\widehat{\mathbf{D}} = \widehat{\mathbf{D}}_I + \mathbf{D}_P$ is treated as known input. Following the updated channel estimate, the information data is re-estimated using the same regularized MMSE method approximation, leveraging the improved channel knowledge. This forms the basis of an \textit{outer-loop iteration}, wherein the data and channel estimates are refined alternately. The process continues until convergence is achieved, ensuring that both the channel parameters and information data are jointly optimized in a passive sensing and communication. Note that the above procedure for the ISAC is used at all the $Z$ receivers. Further, using the estimated channel parameters, the range and radial velocities of the targets to the receivers can be calculated using \eqref{eq:est_tar_para}\footnote{The computational complexity of the ISAC is investigated later while the discussion of \textbf{Algorithm~\ref{algo3}}.}. Next, we determine the location and velocity of a target using its estimated ranges and velocities to the $Z+1$ receivers.

\vspace{-2mm}
\section{Targets Localization Via Cooperative Sensing}\label{sec:loc_n_sen}
\begin{figure}[!t]
	\centering  \includegraphics[width=2.7in]{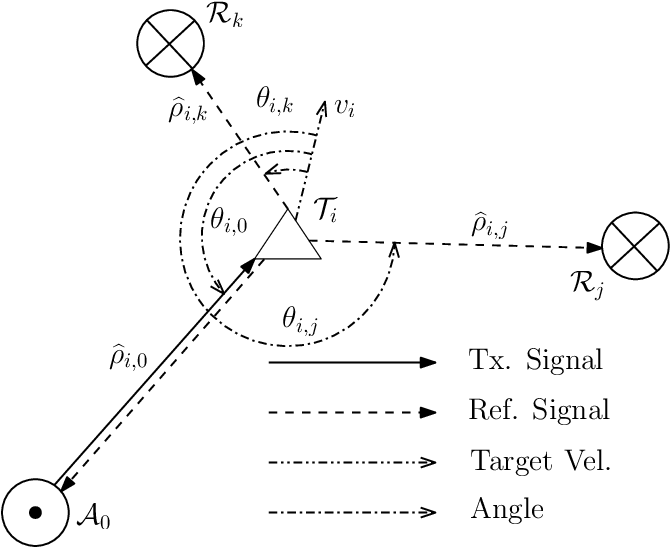}
	\caption{Estimation of location and velocity of a target via cooperative sensing.}    
	\label{fig:dist_vel_est}\vspace{-0mm}
\end{figure}

This section presents a method for determining the location and velocity of a target and then improving accuracy using the estimated ranges and radial velocities relative to the AN $\mathcal{A}_0$ and a set of spatially separated receivers. 

\vspace{-1mm}
\subsection{Targets' Location}
As illustrated in Fig.~\ref{fig:dist_vel_est}, location of a target $\mathcal{T}_i$ is estimated via geometric triangulation\footnote{Triangular geometry is adopted since three non-collinear nodes form the minimum configuration for unique 2D localization, and the corresponding localization error bound is inversely proportional to the triangle area. More complex layouts (e.g., square or hexagonal) reduce to multiple triangulations and offer no additional accuracy gain unless the effective triangular area is increased, while incurring higher coordination overhead.}
using transmitter $\mathcal{A}_0$ and two distinct receivers $\mathcal{R}_j$ and $\mathcal{R}_k$, where $j,k \in \{1,\ldots,Z\}$ and $j \neq k$ \cite{le21}. The estimated ranges $\widehat{\rho}_{i,0}$, $\widehat{\rho}_{i,j}$, and $\widehat{\rho}_{i,k}$ between the target and the three nodes are obtained using the parameter estimation procedure described in~\eqref{eq:est_tar_para}\footnote{Each receiver independently estimates the bistatic range and radial velocity using its locally received OTFS signal and forwards only these low-dimensional sensing parameters to the AN $\mathcal{A}_0$, which fuses them with its monostatic estimates for cooperative localization.}. Given the known coordinates of the transmitter and receivers, $(x_0,y_0)$ for $\mathcal{A}_0$ and $(x_j,y_j)$, $(x_k,y_k)$ for $\mathcal{R}_j$ and $\mathcal{R}_k$, the location $(\alpha_i,\beta_i)$ of the $i$th target is determined as follows.

First, Euclidean distance relations between the target and each node are expressed in~\eqref{eq:euclidean_dist}. By subtracting the equations corresponding to $q=j$ and $q=k$ from that of $q=0$, quadratic terms in the unknown target coordinates are eliminated, yielding two linear equations as shown in~\eqref{eq:simplified}. These equations are represented in matrix form in~\eqref{eq:matrix_form}.

\vspace{-0.5mm}
\begin{subequations}
\begin{align}\label{eq:euclidean_dist}
    &(x_q - \alpha_i)^2 + (y_q - \beta_i)^2 = \rho_{i,q}^2;\;\;\;\;\;\; q\in\{0,j,k\}\\\nonumber
    &\begin{bmatrix}
        x_q - x_0 & y_q - y_0
    \end{bmatrix}
    \begin{bmatrix}
        \alpha_i\\ 
        \beta_i
    \end{bmatrix}
    =\frac{1}{2}(\rho_{i,0}^2 - \rho_{i,q}^2 \\\label{eq:simplified}
    &\hspace{0.7in} - (x_0^2 - x_q^2) - (y_0^2 - y_q^2)); \;\;\;\; q \in \{j,k\}\\
    & \mathbf{A}
    \begin{bmatrix}
        \alpha_i\\ 
        \beta_i
    \end{bmatrix}
    = \mathbf{B};\;\;\;\; \mathbf{A} = 
    \begin{bmatrix}
        x_j - x_0 & y_j - y_0 \\\nonumber
        x_k - x_0 & y_j - y_0
    \end{bmatrix}\\\label{eq:matrix_form}
    & \mathbf{B} = \frac{1}{2}
    \begin{bmatrix}
        \rho_{i,0}^2 - \rho_{i,j}^2 - (x_0^2 - x_j^2) - (y_0^2 - y_j^2)) \\
        \rho_{i,0}^2 - \rho_{i,k}^2 - (x_0^2 - x_k^2) - (y_0^2 - y_k^2))
    \end{bmatrix}\\\label{eq:solution}
    & \begin{bmatrix}
        \alpha_i\\ 
        \beta_i
    \end{bmatrix}
    = \mathbf{A}^{-1}\mathbf{B}
\end{align}
\end{subequations}
The solution in~\eqref{eq:solution} uniquely determines location of the $i$th target provided the  matrix $\mathbf{A}$ is of  full rank, which is ensured when the transmitter and the two receivers are not collinear.

\vspace{-0.5mm}
\subsection{Targets' Velocity}
Using geometric triangulation, as illustrated in Fig.~\ref{fig:dist_vel_est}, we also estimate the velocity of the target. To do so, we first define the unit vectors $\{\mathbf{u}_{q,i}\}$ from the $i$th target to the two observing nodes $q \in \{j, k\}$. These are given by: $\mathbf{u}_{q,i} = [(x_q - \alpha_i)/\rho_{i,q}, (y_q - \beta_i)/\rho_{i,q}]^T$, where $(x_q, y_q)$ denotes the coordinates of node $q$, and $\rho_{i,q}$ is the Euclidean distance between the target and node $q$. Let the unknown velocity of the $i$th target be $\mathbf{v}_i = [v_{i,x}, v_{i,y}]^T$, where $v_{i,x}$ and $v_{i,y}$ are its components along the $x$ and $y$ axes, respectively. The projection of $\mathbf{v}_i$ onto the unit vector $\mathbf{u}_{q,i}$ yields the radial velocity measured by node $q$, denoted as $v_{i,q}$. This can be expressed as: $v_{i,q} = \mathbf{v}_i^T \mathbf{u}_{q,i}, \quad q \in \{j, k\}$. By arranging these equations in matrix form, we obtain a linear system:
\begin{align}\nonumber\label{eq:vel_cal}
    \!\!\!\!&\mathbf{C}
    \begin{bmatrix}
        v_{i,x} \\
        v_{i,y}
    \end{bmatrix}
    = \mathbf{D}
    \quad \Rightarrow \quad
    \begin{bmatrix}
        v_{i,x} \\
        v_{i,y}
    \end{bmatrix}
    = \mathbf{C}^{-1}\mathbf{D}, \;\; \text{where} \\
    \!\!\!\!&\mathbf{C} = 
    \begin{bmatrix}
        (x_j - \alpha_i)/\rho_{i,j} & (y_j - \beta_i)/\rho_{i,j} \\
        (x_k - \alpha_i)/\rho_{i,k} & (y_k - \beta_i)/\rho_{i,k}
    \end{bmatrix},\;
    \mathbf{D} = 
    \begin{bmatrix}
        v_{i,j} \\
        v_{i,k}
    \end{bmatrix}.
\end{align}
This system yields the solution for the velocity vector $\mathbf{v}_i$, assuming $\mathbf{C}$ is invertible, which is generally ensured by appropriate geometric placement of the nodes. Given $Z$ receivers, a total of $\mathcal{Z} \triangleq \binom{Z}{2}$ unique receiver pairs can be formed in conjunction with $\mathcal{A}_0$ to construct geometric triangles. Each of these $\mathcal{Z}$ configurations provides an independent estimate of a target's location and velocity using \eqref{eq:solution} and \eqref{eq:vel_cal}. The straightforward approach to reduce estimation error is to compute the arithmetic mean of the location and velocity estimates obtained from all $\mathcal{Z}$ triangle configurations. The averaged location and velocity of the $i$th target are given by
\begin{subequations}\label{eq:avg_approach}
\begin{align}\label{eq:avg_loc}
    (\alpha_{i,\text{avg}}, \beta_{i,\text{avg}}) &= \left( \frac{1}{\mathcal{Z}} \sum_{t=1}^{\mathcal{Z}} \alpha_{i,t},\; \frac{1}{\mathcal{Z}} \sum_{t=1}^{\mathcal{Z}} \beta_{i,t} \right), \\
    (v_{i,x,\text{avg}}, v_{i,y,\text{avg}}) &= \left( \frac{1}{\mathcal{Z}} \sum_{t=1}^{\mathcal{Z}} v_{i,x,t},\; \frac{1}{\mathcal{Z}} \sum_{t=1}^{\mathcal{Z}} v_{i,y,t} \right).
\end{align}
\end{subequations}
While this approach can mitigate random noise and improve robustness, it does not account for fact that certain triangle configurations may introduce higher errors due to unfavorable geometric conditions, e.g., near-collinear arrangements or poor baseline separations. To further enhance estimation accuracy by leveraging multiple geometric configurations, we next introduce a selection based approach that selectively utilizes only the most reliable component estimates.

\vspace{-0.5mm}
\subsection{Nearest-Neighbor-Based Selection Approach}
Averaging across all $\mathcal{Z}$ triangle-based estimates can help suppress random noise, but it does not consider the reliability of each configuration. In practice, some triangles may yield highly inaccurate estimates due to poor geometry, for example, when nodes are nearly collinear or too closely spaced. To address this, we propose a \textit{nearest-neighbor selection method} that prioritizes consistency among estimates. Let the set of location or velocity estimates for the $i$th target be represented by $\mathcal{C} = \{C_1, C_2, \dots, C_{\mathcal{Z}}\}$, where each $C_t \in \mathbb{R}^2$ corresponds to either $[\alpha_{i,t}, \beta_{i,t}]^T$ or $[v_{i,x,t}, v_{i,y,t}]^T$. To identify the most coherent subset of estimates, we calculate the Euclidean distance between each pair of points in $\mathcal{C}$. For a predefined threshold $\xi > 0$, the neighbor set $\mathcal{S}_n$ of an estimate $C_n$ includes all other estimates that lie within distance $\xi$ of it as
\begin{align}\label{eq:neigh_set}
	\mathcal{S}_n = \left\{ C_m \in \mathcal{C} \setminus \{C_n\} \;\middle|\; \lVert C_n - C_m \rVert \le \xi \right\}.
\end{align}
Among all such sets, we identify the one with the largest number of neighbors as
\begin{align}
	n^* = \arg \max_n \; \text{card}(\mathcal{S}_n).
\end{align}
The final estimate is then computed by averaging $C_{n^*}$ with all points in its neighbor set as
\begin{align}\label{eq:nearest_sel}
	C_{\text{avg}} = \frac{1}{\text{card}(\mathcal{S}_{n^*}) + 1} \left( C_{n^*} + \sum_{C_m \in \mathcal{S}_{n^*}} C_m \right).
\end{align}
By selecting and averaging only estimates that exhibit spatial coherence, this method reduces influence of unexpected estimates with large errors. Now, using estimated targets' location and velocity in \eqref{eq:avg_approach} or \eqref{eq:nearest_sel}, channel is estimated as follows.

\vspace{-0.5mm}
\subsection{Sensing-Aided Channel Reconstruction}
Using the refined estimates of the targets’ locations and velocities obtained in
\eqref{eq:avg_approach} or \eqref{eq:nearest_sel}, the underlying double-spread channel can be reconstructed in a geometry-consistent manner. This sensing-assisted approach exploits the known coordinates of the AN $\mathcal{A}_0$ and the receivers $\{\mathcal{R}_j\}_{j=1}^Z$ to recover the delay and Doppler parameters of each propagation path, thereby enabling low-dimensional yet accurate channel estimation. For the $i$th target, the estimated distance to node $q\in\{0,1,\ldots,Z\}$ is $\widehat{\rho}_{i,q}=\sqrt{(\widehat{\alpha}_i-x_q)^2+(\widehat{\beta}_i-y_q)^2}$. The corresponding propagation delays for the monostatic and bistatic links are
\begin{align}
    \!\!\!\!\!\widehat{\tau}_{i,0}\hspace{-0.5mm} = \hspace{-0.5mm}2\widehat{\rho}_{i,0}/c,\; \widehat{\tau}_{i,j}\hspace{-0.5mm} = \hspace{-0.5mm}(\widehat{\rho}_{i,j}\hspace{-0.5mm} + \hspace{-0.5mm}\widehat{\rho}_{i,0})/c;\; \text{for} \;\; j \hspace{-0.5mm}\in \hspace{-0.5mm}\{1,\cdots, Z\},\!\!
\end{align}
where $c$ denotes the speed of light. To reconstruct the Doppler shifts, define the radial unit vector from node $j$ to the
$i$th target as $\mathbf{u}_{i,j} = \left[(\widehat{\alpha}_{i}-x_j),\; (\widehat{\beta}_i - y_j)\right]^T/\widehat{\rho}_{i,j}$. Projecting the estimated velocity $\widehat{\mathbf{v}}_i=[\widehat{v}_{i,x},\widehat{v}_{i,y}]^T$
onto $\mathbf{u}_{i,j}$ yields the radial velocity $\widehat{v}_{i,j}=\widehat{\mathbf{v}}_i^T\mathbf{u}_{i,j}$. The resulting Doppler shifts are then obtained as
\begin{align}
    \widehat{\nu}_{i,0} = 2\widehat{v}_{i,0} f_c/c, \;\; \widehat{\nu}_{i,j} = \widehat{v}_{i,j}f_c/c  + \widehat{\nu}_{i,0}; \; \text{for}\; j\in \{1,\cdots,Z\},
\end{align}
where $f_c$ is the carrier frequency. With the reconstructed delays $\{\widehat{\tau}_{i,j}\}$ and Doppler shifts $\{\widehat{\nu}_{i,j}\}$, double-spread channel associated with each target--receiver pair is characterized. The corresponding channel gains $\{\widehat{h}_{i,j}\}$ are then estimated using \eqref{eq:h_est}. By leveraging the geometric and kinematic structure of the environment, this sensing-assisted channel reconstruction reduces estimation overhead while improving robustness compared with purely pilot-based approaches. Next, we examine how spatial placement of AN and receivers affects accuracy of the triangulation-based localization process.

\vspace{-0.5mm}
\subsection{Suboptimal Deployment for Estimation Error Minimization}
Using Fig.~\ref{fig:dist_vel_est}, we investigate the effect of the location of the AN and receivers on the estimation of the location of the targets using the following lemma\footnote{Due to space limitations, we focus on geometry-aware deployment for localization error minimization. A detailed analysis of \emph{variance reduction in target velocity estimation}, enabled by the proposed receiver deployment and multistatic OTFS-ISAC framework, will be investigated in future work.}.
\begin{lemma}\label{lemma1}
Consider the localization of the $i$th target $\mathcal{T}_i$ via geometrical triangulation using the AN $\mathcal{A}_0$ and two receivers $\mathcal{R}_j$ and $\mathcal{R}_k$. For given variances $\{\sigma_{i,q}^2\}$ of the squared range estimates $\{\widehat{\rho}_{i,q}^2| q \in \{0,j,k\}\}$ from the three nodes, the localization error covariance matrix of the estimated target position $(\widehat{\alpha}_{i},\widehat{\beta}_{i})$ satisfies that its trace $\operatorname{tr}\!\left(\operatorname{cov}(\widehat{\alpha}_{i},\widehat{\beta}_{i})\right)$ or its maximum eigenvalue $\kappa_{\max}\!\left(\operatorname{cov}(\widehat{\alpha}_{i},\widehat{\beta}_{i})\right)$ can be minimized by optimizing the area of the underlying triangle formed by the three nodes.
\end{lemma}
\begin{proof}
    Without loss of generality, we assume that the AN $\mathcal{A}_0$ is placed at the origin (with coordinate ($0,0$)) and the coordinates of $\mathcal{R}_j$ and $\mathcal{R}_k$ are $(x_j,y_j)$ and $(x_k,y_k)$.  Using \eqref{eq:euclidean_dist}, it can be shown that for given $\{\sigma_{i,q}^2\}$ for $q\in \{0,j,k\}$, the variances $\sigma_{\alpha_i}^2$ and $\sigma_{\beta_i}^2$ in the estimation of the x and y coordinates of the location of the $i$th target using the geometrical triangulation in Fig.~\ref{fig:dist_vel_est}, is given by
    \begin{align}\nonumber
        \sigma_{\gamma_i}^2 = \frac{\sigma_{i,j}^2(g_j+g_k)^2 + \sigma_{i,0}^2g_k^2 + \sigma_{i,k}^2g_j^2}{\mathcal{D}},
    \end{align}
    where $\mathcal{D} = (x_j y_k - x_k y_j)^2$, the variable $g$ is chosen as $g = y$ when computing $\sigma_{\alpha_i}^2$ (x-coordinate variance), and $g = x$ when computing $\sigma_{\beta_i}^2$ (y-coordinate variance). We assume that the errors in the estimation of $\widehat{\alpha}_i$ and $\widehat{\beta}_i$ are independent of each other. Therefore, the covariance matrix $\operatorname{cov}(\widehat{\alpha}_{i},\widehat{\beta}_{i})$ is diagonal matrix. Thus, $\operatorname{tr}(\operatorname{cov}(\widehat{\alpha}_{i},\widehat{\beta}_{i}))$ and $\kappa_{\max}(\operatorname{cov}(\widehat{\alpha}_{i},\widehat{\beta}_{i}))$ are expressed as:
    \begin{subequations}
    \begin{align}\nonumber
        &\operatorname{tr}(\operatorname{cov}(\widehat{\alpha}_{i},\widehat{\beta}_{i})) = \sigma_{\alpha_i}^2 + \sigma_{\beta_i}^2 = \sigma_{i,j}^2[(x_j+x_k)^2 + (y_j+y_k)^2] \\\label{eq:trace_cov}
        & + \sigma_{i,0}^2[x_k^2 + y_k^2] + \sigma_{i,k}^2[x_j^2 + y_j^2]/(x_j y_k - x_k y_j)^2 \\\nonumber
        & \kappa_{\max}(\operatorname{cov}(\widehat{\alpha}_{i},\widehat{\beta}_{i})) = \frac{1}{(x_j y_k - x_k y_j)^2}\max\{\sigma_{i,j}^2(x_j+x_k)^2 \\\label{eq:eigen_cov}
        & + \sigma_{i,0}^2x_k^2 + \sigma_{i,k}^2x_j^2,\;\; \sigma_{i,j}^2(y_j+y_k)^2+ \sigma_{i,0}^2y_k^2 + \sigma_{i,k}^2y_j^2\}
    \end{align}
    \end{subequations}
    It can be observed that the denominator of \eqref{eq:trace_cov} and \eqref{eq:eigen_cov} is proportional to the magnitude of the area of the triangle formed using the three nodes. Thus, it is desired to maximize the area of the triangle in a bounded region. On the other hand, their numerators depend on the magnitude of the x and y coordinates of the receivers. For a lesser magnitude of the x and y coordinates, the area of the triangle decreases. Thus, the area under the triangle needs to be optimized to minimize the trace or the maximum eigenvalue.  
\end{proof}
\noindent One of the sub-optimal solution for the \textbf{Lemma~\ref{lemma1}} can be obtained using the following corollary.
\begin{corollary}\label{coro1}
    For the given variances, $\{\sigma_{i,q}^2\}$; $q\in \{0,j,k\}$, if the receivers are placed at orthogonal axis in \textbf{Lemma~\ref{lemma1}}, then, $\operatorname{tr}(\operatorname{cov}(\widehat{\alpha}_{i},\widehat{\beta}_{i}))$ or $\kappa_{\max}(\operatorname{cov}(\widehat{\alpha}_{i},\widehat{\beta}_{i}))$ can be minimized by maximizing the area of the triangle formed using the three nodes in a given bounded region. 
\end{corollary}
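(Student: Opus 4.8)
The plan is to obtain Corollary~\ref{coro1} as a direct specialization of Lemma~\ref{lemma1}: substitute the orthogonal‑axis placement into \eqref{eq:trace_cov}--\eqref{eq:eigen_cov}, observe that the resulting error metrics decouple into terms each depending on a single coordinate, and then invoke monotonicity to identify the error minimizer with the area maximizer.

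First I would keep $\mathcal{A}_0$ at the origin (as in Lemma~\ref{lemma1}) and, using that $\operatorname{tr}(\cdot)$ and $\kappa_{\max}(\cdot)$ are invariant under a rotation of the coordinate frame, align the axes with the two orthogonal directions along which $\mathcal{R}_j$ and $\mathcal{R}_k$ are deployed, so that $\mathcal{R}_j=(x_j,0)$ and $\mathcal{R}_k=(0,y_k)$, i.e.\ $y_j=x_k=0$ (the diagonal‑covariance assumption of Lemma~\ref{lemma1} being read in this aligned frame). Substituting $y_j=x_k=0$ into \eqref{eq:trace_cov} makes $\mathcal{D}=(x_jy_k)^2$, and after one factor of $x_j^2$ and one of $y_k^2$ cancel against the denominator one is left with
\[
\operatorname{tr}(\operatorname{cov}(\widehat{\alpha}_i,\widehat{\beta}_i)) = \frac{\sigma_{i,j}^2+\sigma_{i,0}^2}{x_j^2}+\frac{\sigma_{i,j}^2+\sigma_{i,k}^2}{y_k^2},
\]
while \eqref{eq:eigen_cov} collapses to $\kappa_{\max}(\operatorname{cov}(\widehat{\alpha}_i,\widehat{\beta}_i))=\max\{(\sigma_{i,j}^2+\sigma_{i,0}^2)/x_j^2,\,(\sigma_{i,j}^2+\sigma_{i,k}^2)/y_k^2\}$. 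The crucial structural point, in contrast with the general topology of Lemma~\ref{lemma1} where each coordinate sits in both numerator and denominator and a genuine trade‑off survives, is that under orthogonality the dependence on $x_j$ and on $y_k$ separates additively (resp.\ as a maximum).

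Both metrics are then strictly decreasing in $|x_j|$ and in $|y_k|$, so within the prescribed bounded deployment region they are minimized by pushing $|x_j|$ and $|y_k|$ as large as the region allows. Since the receivers are constrained to the two orthogonal axes, the feasible set for the two receivers reduces to two independent one‑dimensional constraints, say $|x_j|\le a$ and $|y_k|\le b$, so the two maximizations decouple and are jointly attained at $(|x_j|,|y_k|)=(a,b)$; the triangle area $\tfrac12|x_jy_k|$ over the same region is maximized at precisely that point. Hence the area‑maximizing placement simultaneously minimizes both $\operatorname{tr}(\operatorname{cov}(\widehat{\alpha}_i,\widehat{\beta}_i))$ and $\kappa_{\max}(\operatorname{cov}(\widehat{\alpha}_i,\widehat{\beta}_i))$, which is the claim. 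The only steps that need care are the rotational‑invariance reduction in the coordinate set‑up and pinning down the precise sense of ``bounded region'' in which area maximization and error minimization coincide; the algebra in between is just the routine substitution indicated above, so I do not expect a real obstacle here.
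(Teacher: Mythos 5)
Your proposal is correct and follows essentially the same route as the paper: substitute $\mathcal{R}_j=(x_j,0)$, $\mathcal{R}_k=(0,y_k)$ into \eqref{eq:trace_cov}--\eqref{eq:eigen_cov}, obtain the decoupled forms $\tfrac{\sigma_{i,j}^2+\sigma_{i,0}^2}{x_j^2}+\tfrac{\sigma_{i,j}^2+\sigma_{i,k}^2}{y_k^2}$ and the corresponding $\max$, and conclude by monotonicity that minimizing both metrics coincides with maximizing $|x_j|$ and $|y_k|$, i.e.\ the area $\tfrac12|x_jy_k|$ of the right triangle. Your added remarks on rotational invariance of $\operatorname{tr}$ and $\kappa_{\max}$ and on the corner-point coincidence of the two decoupled maximizations are slightly more explicit than the paper's proof but do not change the argument.
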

\begin{proof}
    If the receivers $\mathcal{R}_j$ and $\mathcal{R}_k$ are placed at the x and y axes (orthogonal axes) of the Cartesian coordinates in \textbf{Lemma~\ref{lemma1}}, then their coordinates are given by: $(x_j,0)$ and $(0,y_k)$, respectively. After, substituting these coordinates in \eqref{eq:trace_cov} and \eqref{eq:eigen_cov}, we get
    \begin{subequations}\label{eq:trace_eigen_three}
    \begin{align}\label{eq:trace_sp}
        &\operatorname{tr}(\operatorname{cov}(\widehat{\alpha}_{i},\widehat{\beta}_{i})) = \frac{\sigma_{i,j}^2 + \sigma_{i,k}^2}{y_k^2} + \frac{\sigma_{i,j}^2 + \sigma_{i,0}^2}{x_j^2}, \\\label{eq:eigen_sp}
        &\kappa_{\max}(\operatorname{cov}(\widehat{\alpha}_{i},\widehat{\beta}_{i})) = \max\left\{\frac{\sigma_{i,j}^2 + \sigma_{i,0}^2}{x_j^2}, \frac{\sigma_{i,j}^2 + \sigma_{i,k}^2}{y_k^2}\right\}.
    \end{align}
     \end{subequations}
     From \eqref{eq:trace_sp} and \eqref{eq:eigen_sp}, it is evident that $\operatorname{tr}(\operatorname{cov}(\widehat{\alpha}_{i},\widehat{\beta}_{i}))$ and $\kappa_{\max}(\operatorname{cov}(\widehat{\alpha}_{i},\widehat{\beta}_{i}))$ can be minimized by maximizing the magnitude of $x_j$ and $y_k$ which is equivalent to maximizing the area of right angled triangle formed using the three nodes.
\end{proof}
\textbf{Lemma~\ref{lemma1}} analyzes localization accuracy for a single triangulation. 
We next generalize this result to multiple triangulations from a multistatic deployment. \textbf{Lemma~\ref{lemma2}} derives the corresponding trace and maximum eigenvalue of the covariance matrix when these estimates are fused.

\begin{lemma}\label{lemma2}
Consider the localization of the $i$th target using $\mathfrak{N}$ independent geometrical triangulations. Each triangulation consists of the AN $\mathcal{A}_0$ located at $(0,0)$ and two receivers $\mathcal{R}_{i,1,\ell}$ and $\mathcal{R}_{i,2,\ell}$ located at $(x_{1,\ell},y_{1,\ell})$ and $(x_{2,\ell},y_{2,\ell})$, respectively, for $\ell\in\{1,\ldots,\mathfrak{N}\}$. Let $\sigma^2_{i,0,\ell}$, $\sigma^2_{i,1,\ell}$, and $\sigma^2_{i,2,\ell}$ denote the variances of the squared range estimates from $\mathcal{A}_0$, $\mathcal{R}_{i,1,\ell}$, and $\mathcal{R}_{i,2,\ell}$ in the $\ell$th triangulation. If the estimate $(\widehat{\alpha}_i,\widehat{\beta}_i)$ is obtained by arithmetic averaging of the $\mathfrak{N}$ independent estimates, then the trace and the maximum eigenvalue of the resulting localization error covariance matrix are given by
\begin{subequations}
    \begin{align}\nonumber
        &\operatorname{tr}(\operatorname{cov}(\widehat{\alpha}_{i},\widehat{\beta}_{i}))\hspace{-0.5mm} =\hspace{-0.5mm} \frac{1}{\mathfrak{N}^2}\hspace{-1mm}\sum_{\ell=1}^{\mathfrak{N}}\hspace{-0.5mm}\left[\sigma_{i,1,\ell}^2\{(x_{1,\ell}\hspace{-0.5mm}+\hspace{-0.5mm}x_{2,\ell})^2\hspace{-0.5mm} + \hspace{-0.5mm}(y_{1,\ell}\hspace{-0.5mm}+\hspace{-0.5mm}y_{2,\ell})^2\}\right. \\\label{eq:trace_cov_mul}
        &\left. + \sigma_{i,0,\ell}^2(x_{2,\ell}^2 \hspace{-0.5mm}+ \hspace{-0.5mm}y_{2,\ell}^2)\hspace{-0.5mm} + \hspace{-0.5mm}\sigma_{i,k,\ell}^2(x_{1,\ell}^2\hspace{-0.5mm} + \hspace{-0.5mm}y_{1,\ell}^2)\right]\hspace{-1mm}/(x_{1,\ell} y_{2,\ell}\hspace{-0.5mm} -\hspace{-0.5mm} x_{2,\ell} y_{1,\ell})^2, \\\nonumber
        & \kappa_{\max}(\operatorname{cov}(\widehat{\alpha}_{i},\widehat{\beta}_{i})) = \frac{1}{\mathfrak{N}^2}\max\left\{\sum_{\ell=1}^{\mathfrak{N}}\left[\sigma_{i,1,\ell}^2(x_{1,\ell}+x_{2,\ell})^2 \right.\right.\\\nonumber\label{eq:eigen_cov_mul} 
        & + \sigma_{i,0,\ell}^2x_{2,\ell}^2
        \left.\left. + \sigma_{i,2,\ell}^2x_{1,\ell}^2 \right]/\mathcal{D}_\ell^2,\;\; \sum_{\ell=1}^{\mathfrak{N}}\left[\sigma_{i,1,\ell}^2(y_{1,\ell}+y_{2,\ell})^2+ \right.\right.\\
        &\sigma_{i,0,\ell}^2y_{2,\ell}^2  
         \left. \left. + \sigma_{i,2,\ell}^2y_{1,\ell}^2\right]/\mathcal{D}_\ell^2\right\},
    \end{align}      
\end{subequations}
where $\mathcal{D}_\ell \triangleq x_{1,\ell}y_{2,\ell}-x_{2,\ell}y_{1,\ell}$ denotes twice the signed area of the triangle formed by the AN and the two receivers in the $\ell$th triangulation. Here, $\mathfrak{N}=\mathcal{Z}$ for the averaging scheme in \eqref{eq:avg_loc} and $\mathfrak{N}=\text{card}(\mathcal{S}_{n^*})+1$ for the selection scheme in \eqref{eq:nearest_sel}. Moreover, both $\operatorname{tr}(\operatorname{cov}(\widehat{\alpha}_i,\widehat{\beta}_i))$ and $\kappa_{\max}(\operatorname{cov}(\widehat{\alpha}_i,\widehat{\beta}_i))$ are minimized by optimizing the area of each triangulation.
\end{lemma}
\begin{proof}
    The proof of \textbf{Lemma~\ref{lemma2}} can be carried out similarly to the proof of \textbf{Lemma~\ref{lemma1}}.
\end{proof}
Next, we show that the sub-optimal solution for \textbf{Lemma~\ref{lemma2}} is equivalent to place the two receivers with multiple uncorrelated antennas at the orthogonal axis with maximum underlying area.
\begin{lemma}\label{lemma3}
Consider the localization of the $i$th target using the AN $\mathcal{A}_0$ and $Z$ receivers with given variances of the squared range estimates. Suppose that the $Z$ receivers are equally partitioned into two groups and deployed along two orthogonal Cartesian axes. Then, the trace $\operatorname{tr}(\operatorname{cov}(\widehat{\alpha}_{i},\widehat{\beta}_{i}))$ and the maximum eigenvalue $\kappa_{\max}(\operatorname{cov}(\widehat{\alpha}_{i},\widehat{\beta}_{i}))$ of the resulting localization error covariance matrix are equivalent to those obtained in a topology consisting of the AN and two receivers, each equipped with $Z/2$ mutually uncorrelated antennas, placed on orthogonal axes. Moreover, in both topologies, these metrics are minimized by maximizing the the underlying triangulation area within a given bounded region.
\end{lemma}
\begin{proof}
Let the AN $\mathcal{A}_0$ be located at $(0,0)$ and the $Z$ receivers be equally divided into two groups of size $Z/2$, placed on the $x$- and $y$-axes at $(x_{1,\ell},0)$ and $(0,y_{2,\ell})$, respectively. Each localization estimate is obtained via a triangulation formed by $\mathcal{A}_0$, one receiver on the $x$-axis, and one on the $y$-axis. Hence, the total number of independent triangulations is $\mathfrak{N}=Z^2/4$. Using \textbf{Corollary~\ref{coro1}} of \textbf{Lemma~\ref{lemma1}} and the averaging result in \textbf{Lemma~\ref{lemma2}}, the trace and the maximum eigenvalue of the resulting localization error covariance matrix are given by
    \begin{subequations}
    \begin{align}\label{eq:trace_lemma3}
        &\operatorname{tr}(\operatorname{cov}(\widehat{\alpha}_{i},\widehat{\beta}_{i})) = \frac{4}{Z^2}\sum_{\ell = 1}^{Z^2/4}\left[\frac{\sigma_{i,1,\ell}^2 + \sigma_{i,0,\ell}^2}{x_{1,\ell}^2} + \frac{\sigma_{i,1,\ell}^2 + \sigma_{i,2,\ell}^2}{y_{2,\ell}^2} \right], \\\nonumber\label{eq:eigen_lemma3}
        &\kappa_{\max}(\operatorname{cov}(\widehat{\alpha}_{i},\widehat{\beta}_{i})) = \frac{4}{Z^2} \max \left\{\sum_{\ell = 1}^{Z^2/4}\frac{\sigma_{i,1,\ell}^2 + \sigma_{i,0,\ell}^2}{x_{1,\ell}^2}, \right. \\
        &\hspace{1.2in}\left. \sum_{\ell = 1}^{Z^2/4} \frac{\sigma_{i,1,\ell}^2 + \sigma_{i,2,\ell}^2}{y_{2,\ell}^2} \right\}.
    \end{align}
    \end{subequations}
For a given bounded deployment region, both \eqref{eq:trace_lemma3} and \eqref{eq:eigen_lemma3} are minimized by maximizing $|x_{1,t}|$ and $|y_{2,t}|$, which is equivalent to maximizing the area of each right-angled triangulation, as established in \textbf{Corollary~\ref{coro1}}. Since all receivers on each axis provide independent observations with identical geometry, their aggregate effect is equivalent to that of two receivers placed on orthogonal axes, each equipped with $Z/2$ mutually uncorrelated antennas. Each antenna pair yields independent triangulation with identical area and variance structure. Therefore, both deployment topologies yield same localization error covariance matrix, completing the proof.
\end{proof}

We next show that, with multiple independent antennas at each node and orthogonal receiver placement, the localization error covariance is reduced by a factor proportional to the cube of the number of antennas.
\begin{lemma}\label{lemma4}
    If each of the three nodes, the AN $\mathcal{A}_0$ and the receivers $\mathcal{R}_1$ and $\mathcal{R}_2$, has $N_a$ number of independent antennas (the separation between the two antennas at a node is greater than $\lambda/2$) and $\mathcal{R}_1$ and $\mathcal{R}_2$ are placed at orthogonal axes, then, the average trace $\operatorname{tr}(\operatorname{cov}(\widehat{\alpha}_{i,\text{avg}},\widehat{\beta}_{i,\text{avg}}))$  and the maximum eigenvalue $\kappa_{\max}(\operatorname{cov}(\widehat{\alpha}_{i,\text{avg}},\widehat{\beta}_{i,\text{avg}}))$, obtained using the approach in \eqref{eq:avg_loc}, are reduced by a factor $N_a^3$ compared to the obtained values of the counterparts in \eqref{eq:trace_eigen_three} of \textbf{Corollary~\ref{coro1}}. Here, we assume that the variances $\sigma_{i,0,\ell}^2$, $\sigma_{i,1,\ell}^2$, and $\sigma_{i,2,\ell}^2$ remain same for all $\ell\in\{1,\cdots,\mathfrak{N}\}$, where $\mathfrak{N} = N_a^3$.
\end{lemma}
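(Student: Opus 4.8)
The plan is to obtain the statement as a direct specialization of Lemma~\ref{lemma2} (with the averaging rule \eqref{eq:avg_loc}) combined with Corollary~\ref{coro1}, once the number of triangulations has been counted. First I would note that, with $\mathcal{A}_0$, $\mathcal{R}_1$ and $\mathcal{R}_2$ each carrying $\mathcal{N}$ antennas whose pairwise spacing exceeds $\lambda/2$, every selection of one antenna at $\mathcal{A}_0$, one at $\mathcal{R}_1$ and one at $\mathcal{R}_2$ gives a valid triangle of the type in Fig.~\ref{fig:dist_vel_est}; since the per-antenna range estimates are uncorrelated, each such triangle produces an estimate of $(\alpha_i,\beta_i)$ via \eqref{eq:solution} that is treated as independent in the sense required by Lemma~\ref{lemma2}, and there are $\mathfrak{N}=\mathcal{N}\cdot\mathcal{N}\cdot\mathcal{N}=\mathcal{N}^{3}$ of them. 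This puts us exactly in the hypothesis of Lemma~\ref{lemma2} with $\mathfrak{N}=\mathcal{N}^{3}$ and the averaged estimate $(\widehat{\alpha}_{i,\text{avg}},\widehat{\beta}_{i,\text{avg}})$ of \eqref{eq:avg_loc}.

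Next I would invoke the two structural assumptions. Because the antenna spacings at a node are negligible relative to the node--target ranges, all $\mathcal{N}^{3}$ triangles share, to first order, the geometry of the single nominal right triangle with $\mathcal{R}_1$ on the $x$-axis at $(x_{1},0)$ and $\mathcal{R}_2$ on the $y$-axis at $(0,y_{2})$; and by hypothesis the variances $\sigma_{i,0,t}^{2}$, $\sigma_{i,1,t}^{2}$, $\sigma_{i,2,t}^{2}$ do not vary with $t$. Consequently the summand in \eqref{eq:trace_cov_mul}, and likewise each of the two inner sums forming the maximum in \eqref{eq:eigen_cov_mul}, is independent of $t$. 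Since $\mathcal{R}_1$ and $\mathcal{R}_2$ lie on orthogonal axes, Corollary~\ref{coro1} evaluates this common value: the trace summand equals the right-hand side of \eqref{eq:trace_sp}, and the two $t$-independent inner sums become precisely the two arguments of the maximum in \eqref{eq:eigen_sp}.

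Substituting a $t$-independent summand into the formulas of Lemma~\ref{lemma2} then closes the argument. Writing $s_{\mathrm{tr}}$ for the common trace summand and $a,b$ for the two (now $t$-independent) arguments of the maximum, we get $\operatorname{tr}(\operatorname{cov}(\widehat{\alpha}_{i,\text{avg}},\widehat{\beta}_{i,\text{avg}}))=\frac{1}{\mathfrak{N}^{2}}\sum_{t=1}^{\mathfrak{N}}s_{\mathrm{tr}}=\frac{s_{\mathrm{tr}}}{\mathfrak{N}}=\frac{s_{\mathrm{tr}}}{\mathcal{N}^{3}}$ and $\kappa_{\max}(\operatorname{cov}(\widehat{\alpha}_{i,\text{avg}},\widehat{\beta}_{i,\text{avg}}))=\frac{1}{\mathfrak{N}^{2}}\max\{\mathfrak{N}a,\mathfrak{N}b\}=\frac{\max\{a,b\}}{\mathcal{N}^{3}}$. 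By Corollary~\ref{coro1}, $s_{\mathrm{tr}}$ is the right-hand side of \eqref{eq:trace_sp} and $\max\{a,b\}$ is the right-hand side of \eqref{eq:eigen_sp}, i.e.\ exactly the quantities in \eqref{eq:trace_eigen_three}; hence both are those quantities divided by $\mathcal{N}^{3}$, which is the claimed reduction.

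I expect the only non-routine part to be the bookkeeping compressed into the first two paragraphs: pinning down that there are exactly $\mathcal{N}^{3}$ usable triangulations, that the $>\lambda/2$ antenna separation lets them be treated as independent in the sense demanded by Lemma~\ref{lemma2}, and that the negligible-spacing and equal-variance hypotheses make the per-triangle contribution a single constant that factors out of the sum. Once that is granted, the collapse $\frac{1}{\mathfrak{N}^{2}}\sum_{t=1}^{\mathfrak{N}}(\cdot)=\frac{1}{\mathfrak{N}}(\cdot)$ is immediate and the remainder is a verbatim appeal to Corollary~\ref{coro1}.
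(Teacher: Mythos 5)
Your proposal reaches the paper's conclusion via the same essential counting step ($\mathfrak{N}=\mathcal{N}^{3}$ triangulations treated as mutually independent), but it formalizes the averaging along a different route than the paper does, so a comparison is worth recording. The paper's proof works at the level of the range-square measurements: it asserts that each of $\widehat{\rho}_{i,0}^{2}$, $\widehat{\rho}_{i,1}^{2}$, $\widehat{\rho}_{i,2}^{2}$ is effectively sampled $\mathcal{N}^{3}$ times, invokes the central limit theorem to shrink each variance $\sigma_{i,q}^{2}$ by $\mathcal{N}^{3}$, and then reads the factor off the linear dependence of \eqref{eq:trace_eigen_three} on these variances. You instead work at the level of the per-triangle location estimates: you substitute a $t$-independent summand into the $\tfrac{1}{\mathfrak{N}^{2}}\sum_{t=1}^{\mathfrak{N}}(\cdot)$ structure of Lemma~\ref{lemma2} and collapse it to $1/\mathfrak{N}$ times the Corollary~\ref{coro1} value. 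Your route is the tidier of the two, since it reuses only machinery already established (Lemma~\ref{lemma2} plus Corollary~\ref{coro1}) and avoids the paper's claim that each range square acquires $\mathcal{N}^{3}$ samples --- in fact each node contributes only $\mathcal{N}$ genuinely distinct range measurements, which are merely reused across $\mathcal{N}^{2}$ triangles each. That reuse is also the one point you correctly flag as non-routine and which neither proof discharges: two triangles sharing an antenna at some node share that antenna's range sample and are therefore correlated, so the $\mathcal{N}^{3}$ location estimates fed into \eqref{eq:avg_loc} are not independent in the sense Lemma~\ref{lemma2} requires. A linearized computation (the averaged location estimate is a linear function of the three per-node averaged range squares, each of variance $\sigma^{2}/\mathcal{N}$) would yield a reduction factor of $\mathcal{N}$ rather than $\mathcal{N}^{3}$; the stated $\mathcal{N}^{3}$ factor holds only under the literal independence assumption that both you and the paper adopt, so on this point your argument is no weaker than the paper's, just more explicit about where the assumption enters.
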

\begin{proof}
Assume that each of the three nodes, namely the AN $\mathcal{A}_0$ and the receivers $\mathcal{R}_1$ and $\mathcal{R}_2$, is equipped with $N_a$ mutually uncorrelated antennas, where the inter-antenna spacing exceeds $\lambda/2$. Each antenna at $\mathcal{A}_0$ can be paired with any antenna at $\mathcal{R}_1$ and $\mathcal{R}_2$, which results in $\mathfrak{N}=N_a^3$ statistically independent geometrical triangulations for the $i$th target. Each triangulation provides an independent estimate of the squared ranges $\widehat{\rho}_{i,0}^2$, $\widehat{\rho}_{i,1}^2$, and $\widehat{\rho}_{i,2}^2$. Under the assumption that the per-antenna range estimation errors are independent and identically distributed with variances $\sigma_{i,0}^2$, $\sigma_{i,1}^2$, and $\sigma_{i,2}^2$, arithmetic averaging over the $N_a^3$ independent realizations yields effective variances that scale inversely with the number of triangulations, i.e., $\operatorname{var}(\widehat{\rho}_{i,q,\mathrm{avg}}^2)=\sigma_{i,q}^2/N_a^3$ for $q\in\{0,1,2\}$. This follows directly from standard variance reduction under independent averaging and is asymptotically justified by the central limit theorem. Substituting these reduced variances into the localization error expressions in \textbf{Corollary~\ref{coro1}}, and noting that both the trace and the maximum eigenvalue of the localization error covariance matrix depend linearly on the range estimation variances, it follows that $\operatorname{tr}(\operatorname{cov}(\widehat{\alpha}_{i,\mathrm{avg}},\widehat{\beta}_{i,\mathrm{avg}}))=\operatorname{tr}(\operatorname{cov}(\widehat{\alpha}_{i},\widehat{\beta}_{i}))/N_a^3$ and $\kappa_{\max}(\operatorname{cov}(\widehat{\alpha}_{i,\mathrm{avg}},\widehat{\beta}_{i,\mathrm{avg}}))=\kappa_{\max}(\operatorname{cov}(\widehat{\alpha}_{i},\widehat{\beta}_{i}))/N_a^3$, where $\operatorname{cov}(\widehat{\alpha}_{i},\widehat{\beta}_{i})$ corresponds to the single-antenna case given in \eqref{eq:trace_eigen_three}. Hence, when the receivers are placed along orthogonal axes, equipping each node with $N_a$ independent antennas reduces both the trace and the maximum eigenvalue of the localization error covariance matrix by a factor proportional to $N_a^3$, which completes the proof.
\end{proof}

Next, we introduce a KF framework that leverages the deployment-optimized sensing estimates to enable accurate and robust tracking of moving targets.

\vspace{-2mm}
\section{Sensing Improvement Using CRW-Based KF}
Here, we describe the trajectory estimation of moving targets using correlated random walk (CRW) based KF. Typically, the movements of the targets like humans, animals, vehicles, and uncrewed aerial vehicles (UAVs) are not entirely random, but they follow the same direction for a while. 

\vspace{-0.5mm}
\subsection{CRW-Based Prediction Model} CRW exploits the Ornstein-Uhlenbeck (OU) process to provide the prediction model \cite{joh02}. Using it, the velocity $\mathbf{v}_{i,t} = [v_{i,\alpha,t}, v_{i,\beta,t}]^T$ (with its $x$ and $y$ components $v_{i,\alpha,t}$ and $v_{i,\beta,t}$) of the  $i$th target at time $t$ is modeled as:
\begin{align}\label{eq:CRW}
    \mathrm{d}\mathbf{v}_{i,t} = \delta_i(\mathbf{v}_{i,t} - \bm{\omega}_{i})\mathrm{d}t + \psi_i \mathrm{d}\mathbf{W}_{i,t},
\end{align}
where $\delta_i$ is the velocity autocorrelation coefficient, $\bm{\omega}_i$ is the mean velocity, $\psi_i$ is the diffusion parameter and $\mathbf{W}_{i,t}$ is the Wiener random process. After solving \eqref{eq:CRW} in continuous time, we get $\mathbf{v}_{i,t}$ in \eqref{eq:vel_exp}, which is further represented in the discrete time domain as in \eqref{eq:vel_disc}.
\begin{subequations}
\begin{align}\label{eq:vel_exp}
    &\!\!\!\!\mathbf{v}_{i,t} = \bm{\omega}_i + (\mathbf{v}_{i,0} - \bm{\omega}_i)\mathrm{e}^{-\delta_i t} + \psi_i \int_{0}^t \mathrm{e}^{-\delta_i (t-s)}\mathrm{d}\mathbf{W}_{i,s},\!\!\!\! \\\label{eq:vel_disc}
    &\!\!\!\!\mathbf{v}_{i,t+1} = \mathbf{v}_{i,t}\mathrm{e}^{-\delta_i \Delta t} + \mathbf{n}_{i,t}^\mathbf{v},\!\!\!\!
\end{align}
\end{subequations}
where $\mathbf{v}_{i,0} = [v_{i,\alpha,0}, v_{i,\beta,0}]^T$ is the initial velocity of the target and $\mathbf{n}_{i,t}^v = [n_{i,t}^{v_\alpha}, n_{i,t}^{v_\beta}]^T$ is the Gaussian noise added to the velocity update. Using \eqref{eq:vel_exp}, the location $\mathbf{l}_{i,t} = [\alpha_{i,t}, \beta_{i,t}]$ of the target is updated as in \eqref{eq:loc_update}. Using it, the location update in discrete time, i.e., the location at $t+1$th time is updated from the location at $t$th time via \eqref{eq:loc_update_dis}.
\begin{subequations}
\begin{align}\label{eq:loc_update}
    &\mathbf{l}_{i,t} = \mathbf{l}_{i,0} + \int_{0}^{t} \mathbf{v}_{i,s} \mathrm{d}s, \\\label{eq:loc_update_dis}
    & \mathbf{l}_{i,t+1} = \mathbf{l}_{i,t} + \frac{1-\mathrm{e}^{-\delta_i \Delta t}}{\delta_i}\mathbf{v}_{i,t} + \mathbf{n}_{i,t}^\mathbf{l},
\end{align}
\end{subequations}
where $\mathbf{n}_{i,t}^\mathbf{l} = [n_{i,t}^\alpha, n_{i,t}^\beta]^T$ is Gaussian noise added to the location update. Next, we use the prediction model based on CRW in the KF algorithm to further enhance the accuracy in sensing of the targets. In this regard, first, we define the state $\mathbf{S}_t$ at time $t$ of the $i$th target as: $\mathbf{S}_t \triangleq [\alpha_{i,t}, \beta_{i,t}, v_{i,x,t}, v_{i,y,t}]^T$. Further, using \eqref{eq:vel_disc} and \eqref{eq:loc_update_dis}, the state transition is given in \eqref{eq:state_tran} and corresponding transition matrix $\mathbf{T}_i$ is expressed in \eqref{eq:tran_mat}, obtained by the Jacobian $J_f(\mathbf{s}_{i,t})$ of the state transition~\cite{ter03}.

\begin{subequations}
    \begin{align}\label{eq:state_tran}
        &\mathbf{s}_{i,t+1} = \mathbf{T}_i\mathbf{s}_{i,t} + \mathbf{n}_{i,t}^{\mathbf{s}}, \\\label{eq:tran_mat}
        &\mathbf{T}_i =
\begin{bmatrix}
1 & 0 & \frac{1 - e^{-\delta_i \Delta t}}{\delta_i} & 0 \\
0 & 1 & 0 & \frac{1 - e^{-\delta_i \Delta t}}{\delta_i} \\
0 & 0 & e^{-\delta_i \Delta t} & 0 \\
0 & 0 & 0 & e^{-\delta_i \Delta t},
\end{bmatrix}
    \end{align}
\end{subequations}
where $\mathbf{n}_{i,t}^{\mathbf{S}} = [n_{i,t}^{\alpha}, n_{i,t}^{\beta}, n_{i,t}^{v_\alpha}, n_{i,t}^{v_\beta}]^T$ is noise process. Noise $\mathbf{n}_{i,t}^{\mathbf{S}}$ captures uncertainties in prediction model via covariance matrix $\mathbf{Q}_{i,t} = \mathbb{E}[\mathbf{n}_{i,t}^{\mathbf{S}} {\mathbf{n}_{i,t}^{\mathbf{S}}}^T]$, with $\mathbb{E}[\mathbf{n}_{i,t}^{\mathbf{S}} {\mathbf{n}_{i,t^{'}}^{\mathbf{S}}}^T] = 0$ for $t\ne t^{'}$.

\vspace{-0.5mm}
\subsection{Measurement Model and Trajectory Estimation Using KF}
As described in Section~\ref{sec:loc_n_sen}, the measurement $\mathbf{z}_{i,t}$ for the location and velocity of the $i$th target is obtained from the average or selection based approaches using \eqref{eq:avg_approach} or \eqref{eq:nearest_sel}. The measurement $\mathbf{z}_{i,t}$ is related to the state $\mathbf{s}_{i,t}$ in the predication model, which is expressed as
\begin{align}\label{eq:meas_mod}
    \mathbf{z}_{i,t} = \mathbf{V}_i \mathbf{s}_{i,t} + \mathbf{n}_{i,t}^\mathbf{z},
\end{align}
where $\mathbf{V}_i$ is the observation matrix and $\mathbf{n}_{i,t}^{\mathbf{z}}$ is the zero-mean Gaussian noise with covariance matrix $\mathbf{R}_i = \mathbb{E}[\mathbf{n}_{i,t}^{\mathbf{z}}{\mathbf{n}_{i,t}^{\mathbf{z}}}^T]$ and $\mathbb{E}[\mathbf{n}_{i,t}^{\mathbf{z}}{\mathbf{n}_{i,t^{'}}^{\mathbf{z}}}^T] = 0$. As in the measurement, we consider all the four $x$ and $y$ components of the location and velocity of the target, therefore, $\mathbf{V}_i = \mathbf{I}_4$, where $\mathbf{I}_n$ is an identity matrix of size $n \times n$. Using \eqref{eq:state_tran} and \eqref{eq:meas_mod}, the KF performs the forecast using the prediction and data correction to enhance the accuracy in the estimation. The KF in \textbf{Algorithm~\ref{algo11}}\footnote{\textbf{Algorithm~\ref{algo11}} follows the standard KF recursion and is not claimed as a new contribution here. It is included for completeness, consistent notation, and to support clear understanding of the subsequent algorithms. This presentation allows a clear distinction between the classical KF operations and the newly proposed CRW-based modeling and geometry-aware OTFS-ISAC integration.} starts with the initialization of $\mathbf{T}_i$, $\mathbf{V}_i$, predication error covariance matrix $\mathbf{Q}_{i,0}$, mesurement $\mathbf{z}_{i,1}$ with noise covariance matrix $\mathbf{R}_{i,1}$ and an estimate of the initial state $\mathbf{s}^a_{i,0} = \mathbb{E}[\mathbf{s}_{i,0}]$ with error covariance matrix $\mathbf{P}_{i,0} = \mathbb{E}[(\mathbf{s}_{i,0} - \mathbf{s}^a_{i,0})(\mathbf{s}_{i,0} - \mathbf{s}^a_{i,0})^T]$, where $\mathbf{s}_{i,0} = [\alpha_{i,0}, \beta_{i,0}, v_{i,\alpha,0}, v_{i,\beta,0}]^T$. Thereafter, for the $t$th time step, $\mathbf{Q}_{i,t-1}$, $\mathbf{R}_{i,t}$, $\mathbf{s}^a_{i,t-1}$, $\mathbf{P}_{i,t-1}$, and $\mathbf{z}_{i,t}$ are inputted to the function $f_{KF}$ that determines $\mathbf{s}^f_{i,t}$, $\mathbf{P}^f_{i,t}$, $\mathbf{K}_{i,t}$, $\mathbf{s}^a_{i,t}$, and $\mathbf{P}_{i,t}$. Finally, the function provides the estimate of the state $\mathbf{s}^a_{i,t}$ and the covariance $\mathbf{P}_{i,t}$.

\begin{algorithm}[!t]
\caption{KF function $f_{KF}(\cdot)$ to estimate the state $\mathbf{s}_{i,t}^a$.}\label{algo11}
\begin{algorithmic}[1]
\State System parameters $\mathbf{T}_i$ and $\mathbf{V}_i$
\State \textbf{Input:} $\mathbf{Q}_{i,t-1}$, $\mathbf{R}_{i,t}$, $\mathbf{s}^a_{i,t-1}$, $\mathbf{P}_{i,t-1}$, and the measurement $\mathbf{z}_{i,t}$

\State Determine the forecasted state: $\mathbf{s}^f_{i,t} = T_i \mathbf{s}^a_{i,t-1}$\label{step:3}
\State Update the forecast error covariance: $\mathbf{P}^f_{i,t} = \mathbf{T}_i \mathbf{P}_{i,t-1} \mathbf{T}_i^T + \mathbf{Q}_{i,t-1}$\label{step:4}

\State Determine Kalman gain: $\mathbf{K}_{i,t} = \mathbf{P}^f_{i,t} \mathbf{V}_i^T (\mathbf{V}_i \mathbf{P}^f_{i,t} \mathbf{V}_i^T + \mathbf{R}_{i,t})^{-1}$\label{step:5}
\State Update state estimate: $\mathbf{s}^a_{i,t} = \mathbf{s}^f_{i,t} + \mathbf{K}_{i,t} (\mathbf{z}_{i,t} - \mathbf{V}_i \mathbf{s}^f_{i,t})$\label{step:6}
\State Update posterior error covariance: $\mathbf{P}_{i,t} = (\mathbf{I} - \mathbf{K}_{i,t} \mathbf{V}_i) \mathbf{P}^f_{i,t} (\mathbf{I} - \mathbf{K}_{i,t} \mathbf{V}_i)^T + \mathbf{K}_{i,t} \mathbf{R}_{i,t} \mathbf{K}_{i,t}^T$\label{step:7}

\State \textbf{Output:} State estimate $\mathbf{s}^a_{i,t}$ and the error covariance $\mathbf{P}_{i,t}$
\end{algorithmic}
\end{algorithm}

\vspace{-0.5mm}
\subsubsection*{Complexity of \textbf{Algorithm~\ref{algo11}}}
The computational complexity of one KF update in \textbf{Algorithm~\ref{algo11}} depends on the state dimension $n$ and the measurement dimension $m$. The prediction step incurs $\mathcal{O}(n^2)$ for state propagation and $\mathcal{O}(n^3)$ for covariance update. In the measurement update, Kalman gain computation and posterior covariance update require $\mathcal{O}(n^2 m + n m^2 + m^3)$ operations, while the state update costs $\mathcal{O}(nm)$. Overall, the complexity of a single KF update is $\mathcal{O}(n^3 + n^2 m + n m^2 + m^3)$, which reduces to $\mathcal{O}(n^3)$ when $m \leq n$.

\begin{algorithm}[!t]
\caption{Improved active sensing using KF.}\label{algo2}
\begin{algorithmic}[1]
\State \textbf{System parameters:} $\{\mathbf{T}_i\}$, $\{\mathbf{V}_i\}$ for $i \in \{1, \cdots, P\}$, and others independent parameters
\State \textbf{Input:} $\{\mathbf{s}_{i,0}^a\}$, $\{\mathbf{P}_{i,0}\}$
\For{each $t \in \{1,\cdots, \mathcal{T}\}$}
\For{each $i \in \{1,\cdots, P\}$}
\For{each $j \in \{0,\cdots, Z\}$}
\State Determine $(\widehat{\tau}_{i,j,t},\widehat{\nu}_{i,j,t})$ and $\widehat{h}_{i,j,t}$ from \eqref{eq:ith_est} using the known data $\mathbf{d}$\label{step:62}
\State Determine the range $\widehat{\rho}_{i,j,t}$ and the radial velocity $\widehat{v}_{i,j,t}$ from \eqref{eq:est_tar_para}\label{step:72}
\EndFor
\State Get $\mathbf{Q}_{i,t-1}$ and determine the averaged or selection based estimated observation vector $\mathbf{z}_{i,t} = [\widehat{\alpha}_{i,t},\widehat{\beta}_{i,t},\widehat{v}_{i,x,t},\widehat{v}_{i,y,t}]^T$ using \eqref{eq:avg_approach} or \eqref{eq:nearest_sel} and get the noise covariance $\mathbf{R}_{i,t}$\label{step:92}
\State Determine $\mathbf{s}_{i,t}^a$ and $\mathbf{P}_{i,t}$ using the function $f_{KF}(\mathbf{s}_{i,t-1}^a, \mathbf{P}_{i,t-1}$, $\mathbf{Q}(i,t-1)$, $\mathbf{R}_{i,t})$ in \textbf{Algorithm~\ref{algo11}}\label{step:102}
\EndFor
\EndFor
\end{algorithmic}
\end{algorithm}
In \textbf{Algorithm~\ref{algo2}}, the active sensing as described in Section~\ref{sec:act_sense} is further improved using the KF. Here, for each target $i$ and receiver $j$ at a given time $t$, the delay $\widehat{\tau}_{i,j}$, Doppler $\widehat{\nu}_{i,j}$, and the channel gain $\widehat{h}_{i,j}$ are estimated, followed by the range $\widehat{\rho}_{i,j}$ and the radial velocity $\widehat{v}_{i,j}$. Then, for each target $i$, the measurement $\mathbf{z}_{i,t}$ is determined; thereafter, using function $f_{KF}(\cdot)$ in \textbf{Algorithm~\ref{algo11}}, the sensing variables are estimated.

\vspace{-0.5mm}
\subsubsection*{Complexity of \textbf{Algorithm~\ref{algo2}}} 
The computational complexity of \textbf{Algorithm~\ref{algo2}} scales with the number of time steps $\mathcal{T}$, the number of targets $P$, the number of receivers $Z$, and the state dimension $n$. For each time instant and target, delay--Doppler parameter estimation at all $Z+1$ receivers requires correlation searches with complexity $\mathcal{O}((Z+1)G_\tau G_\nu MN)$, which dominates over the negligible cost of range--velocity mapping and measurement formation. The subsequent Kalman filter update incurs $\mathcal{O}(n^3)$ complexity. Accordingly, the overall complexity of \textbf{Algorithm~\ref{algo2}} is $\mathcal{O}\!\left(\mathcal{T} P \big((Z+1)G_\tau G_\nu MN + n^3\big)\right)$, indicating linear scaling with $\mathcal{T}$, $P$, and $Z$, quadratic dependence on the delay--Doppler grid resolution, and cubic dependence on the state dimension.

In \textbf{Algorithm~\ref{algo3}}, for each time $t$, target $i$ and receiver $j$, the coarse estimates of the delay, Doppler, and channel gain, $(\widehat{h}_{i,j,t}^{(0)}, \widehat{\tau}_{i,j,t}^{(0)}, \widehat{\nu}_{i,j,t}^{(0)})$, are first obtained using the pilot $\mathbf{d}_P$. These estimates are mapped to the range and velocity domain, from which the coarse location and velocity of each target are determined via the averaging or selection methods in \eqref{eq:avg_approach} and \eqref{eq:nearest_sel}. The reverse mapping from range and velocity is then applied to reconstruct channel parameters that serve as initialization for iterative refinement. The refinement is carried out in an \textit{outer loop} that alternates between communication data detection and sensing parameter update. Within the inner loop, the received data is iteratively refined using a gradient-based update (cf. \eqref{eq:gradient}, \eqref{eq:update_rule}) until convergence within the tolerance $\epsilon_d$. The recovered symbols are demodulated to obtain the information data, which are then used to improve the channel estimates $(\widehat{\mathbf{h}}_{j,t}, \widehat{\bm{\tau}}_{j,t}, \widehat{\bm{\nu}}_{j,t})$.  For each target $i$, the refined measurements are aggregated into the observation vector $\mathbf{z}_{i,t} = [\widehat{\alpha}_{i,t}, \widehat{\beta}_{i,t}, \widehat{v}_{i,x,t}, \widehat{v}_{i,y,t}]^T$. The KF function $f_{KF}$ from \textbf{Algorithm~\ref{algo11}} is then applied to update the state $\mathbf{s}_{i,t}^a$ and error covariance $\mathbf{P}_{i,t}$. This ensures reliable tracking of the location and velocity despite noise and estimation errors.  

The outer loop continues until the differences between successive channel estimates fall below the predefined thresholds $(\epsilon_h, \epsilon_{\tau}, \epsilon_{\nu})$. Once satisfied for all receivers, algorithm terminates after convergence. In this way, \textbf{Algorithm~\ref{algo3}} integrates communication data recovery and passive sensing in a closed-loop process, where improved data detection enhances channel estimation and, in turn, refined channel estimates improve sensing and tracking performance through KF.

\begin{breakablealgorithm}
\caption{Improved passive sensing and communication using KF.}
\label{algo3}
\begin{algorithmic}[1]
\State \textbf{System parameters:} $\{\mathbf{T}_i\}$ and $\{\mathbf{V}_i\}$ for $i \in \{1, \cdots, P\}$, thresholds $\epsilon_d$, $\epsilon_h$, $\epsilon_{\tau}$, $\epsilon_{\nu}$, and other independent parameters
\State \textbf{Input:} $\{\mathbf{s}_{i,0}^a\}$, $\{\mathbf{P}_{i,0}\}$
\For{each $t \in \{1,\cdots, \mathcal{T}\}$}
\For{each $i \in \{1, \cdots, P \}$}
\For{each $j \in \{0,\cdots, Z\}$}
\State Initially, estimate $(\widehat{h}_{i,j,t}^{(0)}, \widehat{\tau}_{i,j,t}^{(0)}, \widehat{\nu}_{i,j,t}^{(0)})$ via the coarse estimation using the pilot $\mathbf{d}_P$ in \eqref{eq:est_pilot_min} \label{step63}
\State Determine $\widehat{\rho}_{i,j,t}^{(0)}$ and $\widehat{\nu}_{i,j,t}^{(0)}$ using \eqref{eq:est_tar_para}\label{step73}
\EndFor
\State Determine the average or selection based estimated location $(\widehat{\alpha}_{i,t}^{(0)}, \widehat{\beta}_{i,t}^{(0)})$ and velocity $(\widehat{v}_{i,x,t}^{(0)}, \widehat{v}_{i,y,t}^{(0)})$ using \eqref{eq:avg_approach} or \eqref{eq:nearest_sel}\label{step93}
\For{each $j \in \{0,\cdots, Z\}$}
\State Using the reverse process of \eqref{eq:est_tar_para}, find the delay $\overline{\tau}_{i,j,t}^{(0)}$ and the Doppler shift $\overline{\nu}_{i,j,t}^{(0)}$
\State Using $(\overline{\tau}_{i,j,t}^{(0)}, \overline{\nu}_{i,j,t}^{(0)})$, obtain $\overline{h}_{i,j,t}^{(0)}$ for pilot $\mathbf{d}_P$ from \eqref{eq:h_est}
\EndFor\label{step133}
\EndFor
\State \textbf{Initialize outer loop:} $\text{iter}_{out} = 0$, $\text{conv}_{out} = \text{`FALSE'}$, $(\mathbf{h}_j^{curr}, \bm{\tau}_j^{curr}, \bm{\nu}_j^{curr}) = (\widehat{\mathbf{h}}_j^{(0)},\widehat{\bm{\tau}}_j^{(0)},\widehat{\bm{\nu}}_j^{(0)})$, and flags $(f_1,\cdots,f_P) = (\text{`FALSE'},\cdots,\text{`FALSE'})$
\While{NOT $\text{conv}_{out}$}
\
\For{each $j \in \{0,\cdots, Z\}$}
\State $\text{iter}_{out} = \text{iter}_{out} + 1$
\State \textbf{Data Estimation (inner loop):} 
\State Initialize $\mathbf{d}_{j,t}^{(0)} = \mathbf{0}$, $q = 0$, and $\text{conv}_{in} = \text{`FALSE'}$
\While{NOT $\text{conv}_{in}$}
\State Compute $\nabla J(\mathbf{d}_{j,t})^{(q)}$ using \eqref{eq:gradient}\label{step223}
\State Compute $\mathbf{d}_{j,t}^{(q+1)}$ using \eqref{eq:update_rule}
\If{$\|\mathbf{d}_{j,t}^{(q+1)} - \mathbf{d}_{j,t}^{(q)}\| < \epsilon_d$}
\State $\text{conv}_{in} = \text{`True'}$
\Else
\State $q = q + 1$\label{step273}
\EndIf
\EndWhile
\State Perform $\mathbf{d}_{j,t}^{'} = \mathbf{d}_{j,t}^{(q+1)}$ and $\mathbf{D}_{j,t}^{'} = \textbf{vec}^{-1}(\mathbf{d}_{j,t}^{'})$
\State Obtain information data $\mathbf{D}_{I,j,t}^{'} = \mathbf{D}_{j,t}^{'} - \mathbf{D}_P$
\State Obtain $\widehat{\mathbf{D}}_{I,j,t}$ from $\mathbf{D}_{j,t}^{'}$ using sybmol-wise demodulation from \eqref{eq:symb_demod}\label{step323}
\State \textbf{Refined channel estimation (active sensing):}
\State Perform $\widehat{\mathbf{D}}_{j,t} = \widehat{\mathbf{D}}_{I,j,t} + \mathbf{D}_P$ and $\widehat{\mathbf{d}}_{j,t} = \text{vec}(\widehat{\mathbf{D}}_{j,t})$
\If{$f_j == \text{`FALSE'}$}
\State Obtain $(\widehat{\mathbf{h}}_{j,t}, \widehat{\bm{\tau}}_{j,t}, \widehat{\bm{\nu}}_{j,t})$ from \eqref{eq:est_pilot_min} using the data $\widehat{\mathbf{d}}_{j,t}$\label{step363}
\State Determine the range $\widehat{\bm{\rho}}_{j,t}$ and the radial velocity $\widehat{\mathbf{v}}_{j,t}$ using \eqref{eq:est_tar_para}
\EndIf
\EndFor
\For{each $i\in \{1, \cdots, P\}$}
\State Get $\mathbf{Q}_{i,t-1}$ and determine the averaged or selection based estimated observation vector $\mathbf{z}_{i,t} = [\widehat{\alpha}_{i,t},\widehat{\beta}_{i,t},\widehat{v}_{i,x,t},\widehat{v}_{i,y,t}]^T$ using \eqref{eq:avg_approach} or \eqref{eq:nearest_sel} and get the noise covariance $\mathbf{R}_{i,t}$\label{step413}
\State Determine $\mathbf{s}_{i,t}^a$ and $\mathbf{P}_{i,t}$ using the function $f_{KF}(\mathbf{s}_{i,t-1}^a, \mathbf{P}_{i,t-1}$, $\mathbf{Q}_{i,t-1}$, $\mathbf{R}_{i,t})$ in \textbf{Algorithm~\ref{algo11}}\label{step423}
\EndFor
\For{each $j \in \{0,\cdots, Z\}$}
\If{$f_j == \text{`FALSE'}$}
\State Using the reverse process of \eqref{eq:est_tar_para}, find the delay $\bm{\tau}_{j,t}^{new}$ and the Doppler shift $\bm{\nu}_{j,t}^{new}$
\State Using $(\bm{\tau}_{j,t}^{new}, \bm{\nu}_{j,t}^{new})$, obtain $\mathbf{h}_{j,t}^{new}$ for data $\widehat{\mathbf{d}}_{j,t}$ from \eqref{eq:h_est}
\If{$\lVert \mathbf{h}_{j,t}^{new} - \mathbf{h}_{j,t}^{curr} \rVert < \epsilon_h$ \& $\lVert \bm{\tau}_{j,t}^{new} - \bm{\tau}_{j,t}^{curr} \rVert < \epsilon_{\tau}$ \& $\lVert \bm{\nu}_{j,t}^{new} - \bm{\nu}_{j,t}^{curr} \rVert < \epsilon_{\nu}$}\label{step483}
\State $f_j = \text{`TRUE'}$
\Else 
\State \hspace{-5mm}$(\mathbf{h}_{j,t}^{curr}, \hspace{-0.5mm}\bm{\tau}_{j,t}^{curr}, \hspace{-0.5mm}\bm{\nu}_{j,t}^{curr})\hspace{-0.5mm} = \hspace{-0.5mm}(\mathbf{h}_{j,t}^{new},\hspace{-0.5mm} \bm{\tau}_{j,t}^{new},\hspace{-0.5mm} \bm{\nu}_{j,t}^{new})$
\EndIf
\EndIf
\EndFor
\If{$(f_1,\cdots,f_P) = (\text{`TRUE'},\cdots,\text{`TRUE'})$}
\State $\text{conv}_{out} = \text{`TRUE'}$
\EndIf
\EndWhile
\EndFor
\end{algorithmic}
\end{breakablealgorithm}

\vspace{-0.5mm}
\subsubsection*{Complexity of \textbf{Algorithm~\ref{algo3}}}  
The computational complexity of \textbf{Algorithm~\ref{algo3}} scales with the number of time steps $\mathcal{T}$, the number of targets $P$, the number of receivers $Z$, the state dimension $n$, and the average numbers of inner and outer iterations, $I_{\mathrm{in}}$ and $I_{\mathrm{out}}$. For each time instant and target, the initial coarse channel estimation at all receivers requires delay--Doppler correlation searches with complexity $\mathcal{O}((Z+1)G_\tau G_\nu MN)$, which dominates over the negligible cost of range--velocity mappings. Within each outer iteration, communication data refinement via $I_{\mathrm{in}}$ gradient-based updates incurs $\mathcal{O}((Z+1)I_{\mathrm{in}}MN)$, followed by refined channel estimation with complexity $\mathcal{O}((Z+1)G_\tau G_\nu MN)$. The KF-based state update per target requires $\mathcal{O}(n^3)$, while convergence checks have negligible cost. Accounting for $I_{\mathrm{out}}$ outer iterations, the overall complexity of \textbf{Algorithm~\ref{algo3}} is
$\mathcal{O}\!\Big(\mathcal{T} P \big((Z+1)\big(G_\tau G_\nu MN + I_{\mathrm{out}}(I_{\mathrm{in}}MN + G_\tau G_\nu MN)\big) + n^3\big)\Big)$.

\vspace{-0.5mm}
\subsection{Future Directions in Equivalent Multi-Antenna Topologies for Beamforming and Interference Mitigation}

Due to space constraints, we briefly highlight future extensions enabled by the equivalent multi-antenna topology established in \textbf{Lemma~\ref{lemma4}}. This shows that the proposed orthogonal-axis deployment is equivalent, in terms of localization error scaling, to a three-node architecture in which the AN and two receivers are equipped with multiple mutually uncorrelated antennas, realizable via co-located elements with spacing $\geq \lambda/2$. This equivalence allows spatial degrees of freedom to be exploited alongside DD-domain sparsity, enabling beamforming gain and interference suppression without modifying the OTFS processing pipeline.

After OTFS demodulation, antenna-wise DD-domain observations can be coherently combined as
\begin{equation}
\tilde{z}_{j}(\tau,\nu)=\mathbf{w}_{j}^{H}(\tau,\nu)\mathbf{z}_{j}(\tau,\nu),
\end{equation}
where low-overhead combining schemes such as MRC, MVDR~\cite{hab22}, or MMSE can be applied using DD-domain gain or covariance estimates. Such spatial combining is well matched to OTFS-ISAC, where the DD-domain channel is quasi-static, and is particularly effective under high mobility. When sensing coexists with downlink communication, sparse DD-domain MIMO channel estimates further enable linear precoding or Tomlinson--Harashima-type pre-cancellation~\cite{wese23} for efficient multiuser interference mitigation. Existing DD-domain interference cancellation and regularized MMSE equalization can be strengthened by incorporating spatial filtering prior to DD-domain estimation. Finally, Kalman filter–based prediction of the target state $(\alpha,\beta,\mathbf{v})$ enables DD-domain search gating to a small predicted $(\tau,\nu)$ region, forming an efficient predict--focus--combine loop for dynamic vehicular scenarios.




\vspace{-2mm}
\section{Numerical Results}\label{sec:num_rel}
This section assesses the performance of the proposed framework through numerical simulations. Guided by \textbf{Lemma~\ref{lemma3}}, a geometry-aware deployment with three nodes forming the maximum-area triangle within a bounded region is adopted, as it yields superior sensing accuracy compared to randomly deployed single-antenna nodes. Unless stated otherwise, this topology is used throughout the simulations. OTFS modulation is employed with $M=256$ subcarriers and $N=16$ symbols, using $4$-QAM signaling, subcarrier spacing $\Delta f = 240$~kHz, carrier frequency $f_c = 30$~GHz, and a cyclic prefix of $64$ samples, operating at an SNR of $0$~dB~\cite{gau20}. The network comprises $N_B = 3$ sensing nodes and $N_T = 4$ moving targets within a $400 \times 400$~m$^2$ area, where each node is equipped with a single antenna unless specified otherwise. Target motion follows a CRW model with velocity autocorrelation coefficient $\delta_i = 1.5$, diffusion parameter $\psi_i = 0.5$, and sampling interval $\Delta t = 0.5$~s. Initial target positions, velocities, and headings are randomly initialized within the region. These trajectories serve solely as ground truth for evaluation, while all localization and tracking results are obtained exclusively using the proposed KF-assisted multistatic OTFS-ISAC framework.

\begin{figure}[!t]
	\centering\vspace{-3mm}\!\!\!\!
	\subfigure[]{\includegraphics[width=0.26\textwidth]{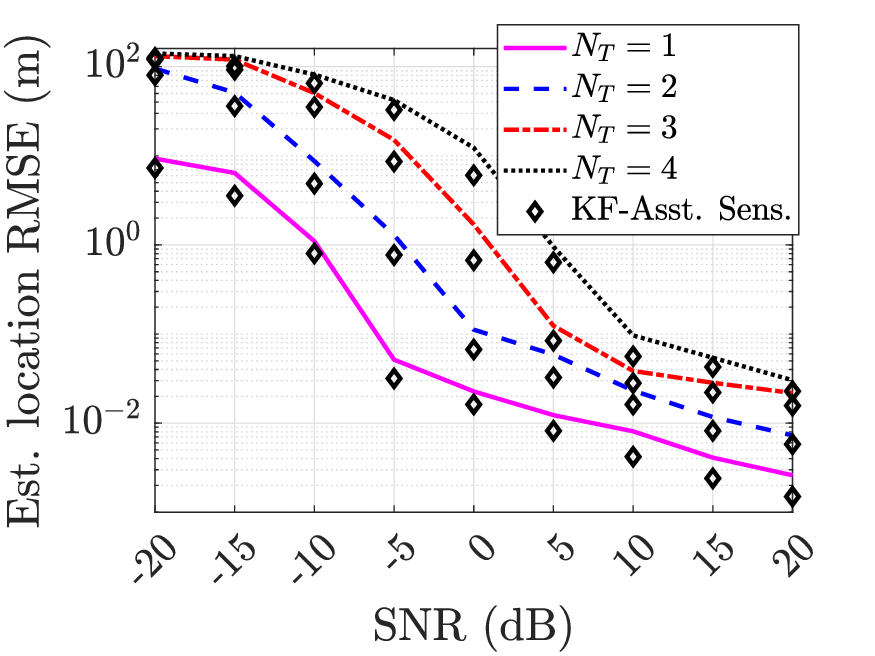}}\hspace{-2mm} \!\!\!\!
	\subfigure[]{\includegraphics[width=0.26\textwidth]{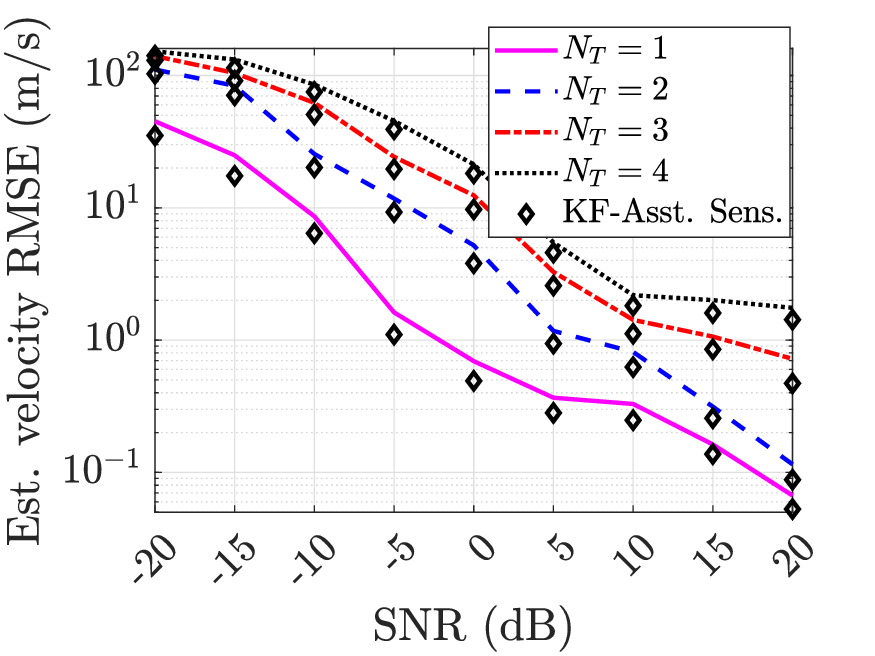}}\!\!\!\!
	\caption{RMSE improvement with SNR for different numbers of targets.}
	\label{fig:fig1}
\end{figure} 

\begin{figure}[!t]
	\centering\!\!\!\!\!\!
	\subfigure[]{\includegraphics[width=0.26\textwidth]{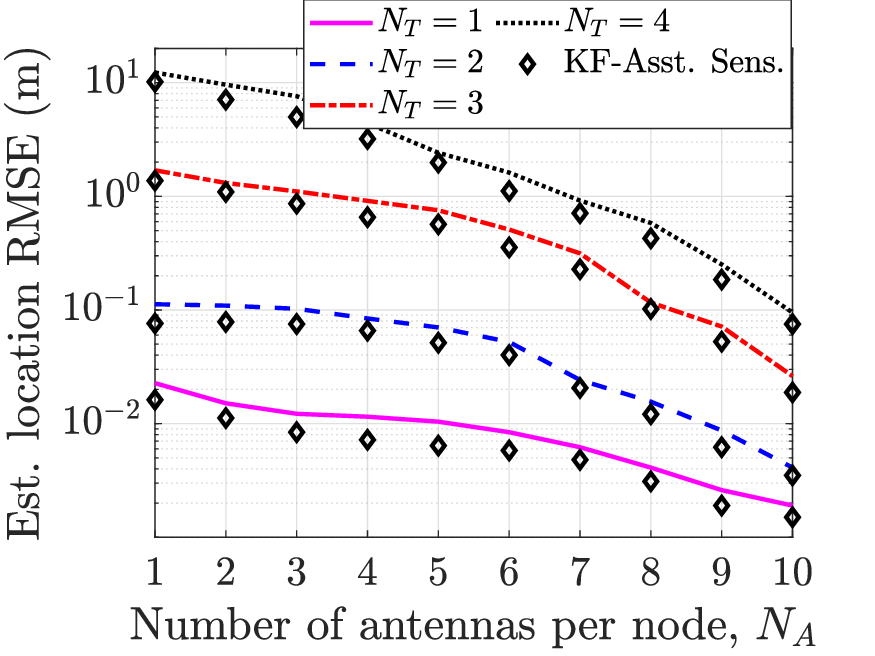}}\hspace{-2mm} \!\!\!\!\!\!
	\subfigure[]{\includegraphics[width=0.26\textwidth]{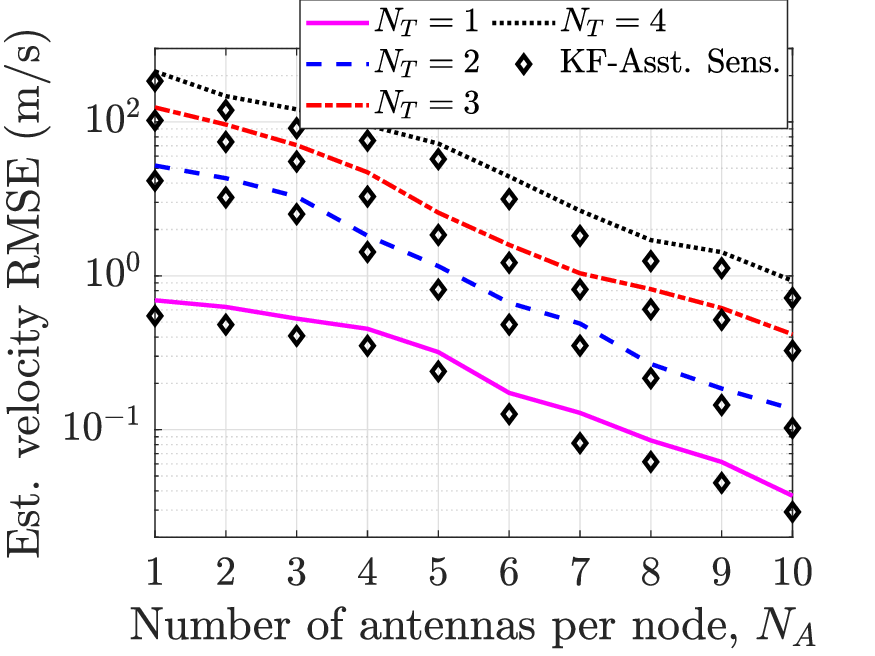}} \!\!\!\!\!\!
	\caption{RMSE improvement with number of antennas per node, $N_A$ for different numbers of targets, $N_T$.}
	\label{fig:fig2}
\end{figure} 

\begin{figure*}[!t]
	\centering\!\!\!\!
	\subfigure[]{\includegraphics[width=0.33\textwidth]{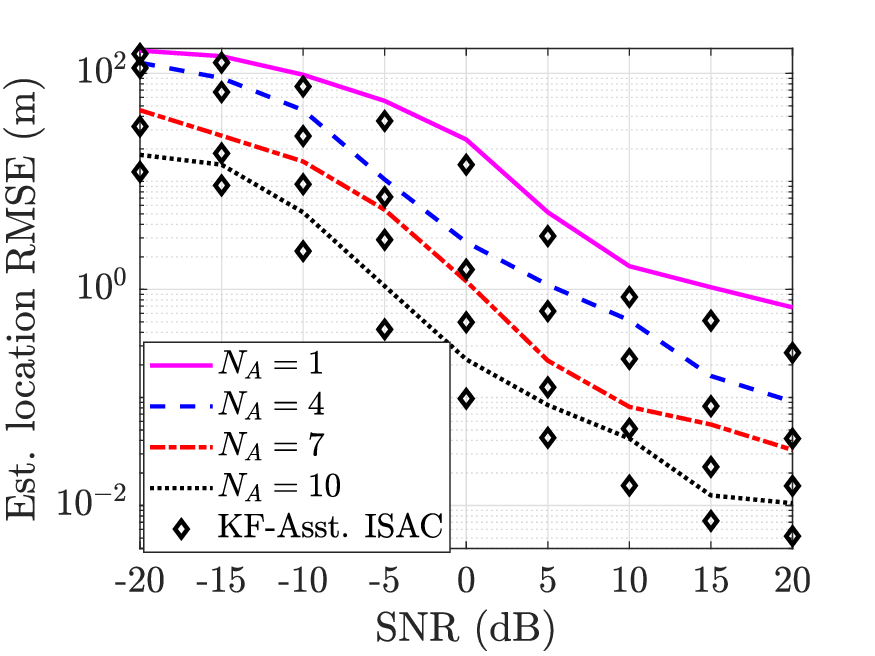}}
	\subfigure[]{\includegraphics[width=0.33\textwidth]{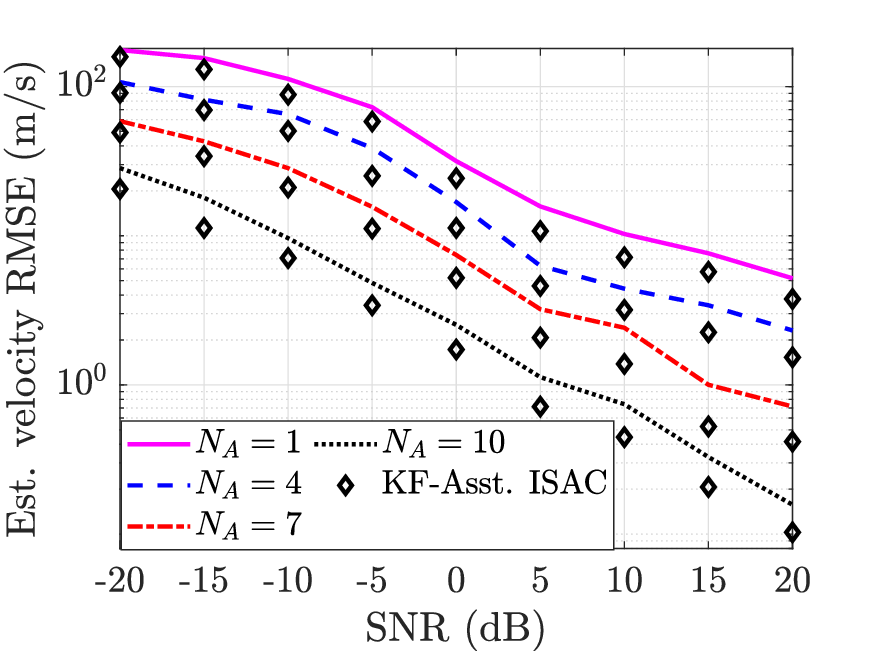}}
    \subfigure[]{\includegraphics[width=0.33\textwidth]{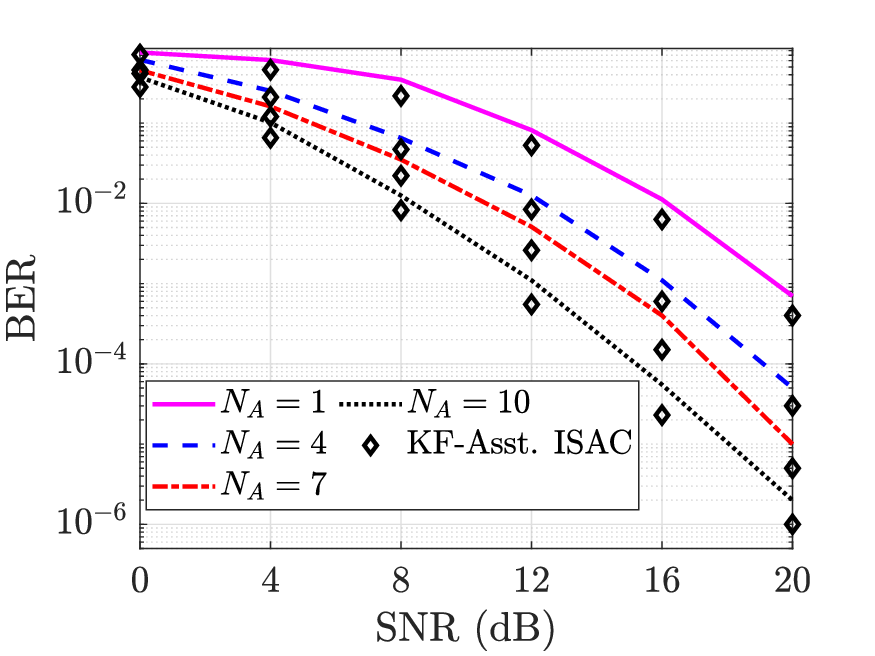}} \!\!\!\!
	\caption{Performance of ISAC in the localization and the data rate estimation.}
	\label{fig:fig3}
\end{figure*} 

Using Figs.~\ref{fig:fig1} and \ref{fig:fig2}, we evaluate active sensing and KF-assisted active sensing in terms of location and velocity RMSE. Across all SNRs and target counts, KF assistance consistently enhances performance by optimally fusing model-based prediction with noisy measurements. As shown in Fig.~\ref{fig:fig1}(a), the location RMSE versus SNR indicates markedly superior performance at $-20$~dB for a single target ($N_T=1$). For $N_T>1$, the RMSE increases by approximately $85$~m due to the growing difficulty of resolving multiple correlation peaks in low-SNR regimes (cf.~\eqref{eq:ith_est}). Averaged over SNR, the location RMSE increases by $15.53$~m, $18.00$~m, and $10.03$~m as $N_T$ increases from $1$ to $2$, $2$ to $3$, and $3$ to $4$, respectively, whereas KF-assisted sensing achieves an average RMSE reduction of $4.96$~m. Fig.~\ref{fig:fig1}(b) shows a similar trend for velocity estimation, where the average RMSE increases by $17.50$~m/s, $12.29$~m/s, and $10.87$~m/s for the same target increments, while KF assistance reduces the RMSE by $4.02$~m/s. Fig.~\ref{fig:fig2} further demonstrates that both location and velocity RMSE decrease monotonically with the number of antennas per node $N_A$ for all $N_T$, in agreement with \textbf{Lemma~\ref{lemma3}}, as $N_A^3$ independent observations yield cubic-order error reduction. For $N_A=1$, the average location and velocity RMSEs are $3.53$~m and $9.90$~m/s, respectively, which reduce to $1.18$~m and $3.44$~m/s when averaged over all $N_A$ and $N_T$, with additional KF-induced gains of $0.31$~m and $0.73$~m/s.

In Fig.~\ref{fig:fig3}, we evaluate the proposed ISAC framework in terms of localization RMSE, velocity RMSE, and achievable data rate versus SNR for different numbers of antennas per node $N_A$. At $0$~dB SNR, increasing $N_A$ from $1$ to $10$ substantially improves sensing accuracy for $N_T=4$ targets, reducing the location and velocity RMSE from $24.41$~m and $31.65$~m/s to $0.23$~m and $2.53$~m/s, respectively. Under the same setting in Fig.~\ref{fig:fig2}, active sensing further reduces the RMSE from $12.30$~m and $21.28$~m/s to $0.09$~m and $0.93$~m/s, confirming the superiority of active sensing over passive ISAC due to full waveform knowledge at the receiver, whereas passive ISAC relies on iterative signal recovery (\textbf{Algorithm~\ref{algo3}}). Averaged over SNR, the baseline ISAC achieves RMSEs of $25.06$~m and $31.63$~m/s in location and velocity, respectively, which are reduced to $19.76$~m and $25.46$~m/s with KF assistance. Moreover, as shown in Fig.~\ref{fig:fig3}(c), the BER decreases from approximately $0.30$ at $N_A=1$ to $8.13\times10^{-2}$ at $N_A=4$, with average BERs of $0.16$ for ISAC and $0.13$ for KF-assisted ISAC across the considered SNR range.

   \begin{figure}[!t]
		\centering\vspace{-3mm}\!\!\!\!
		\subfigure[]{\includegraphics[width=0.26\textwidth]{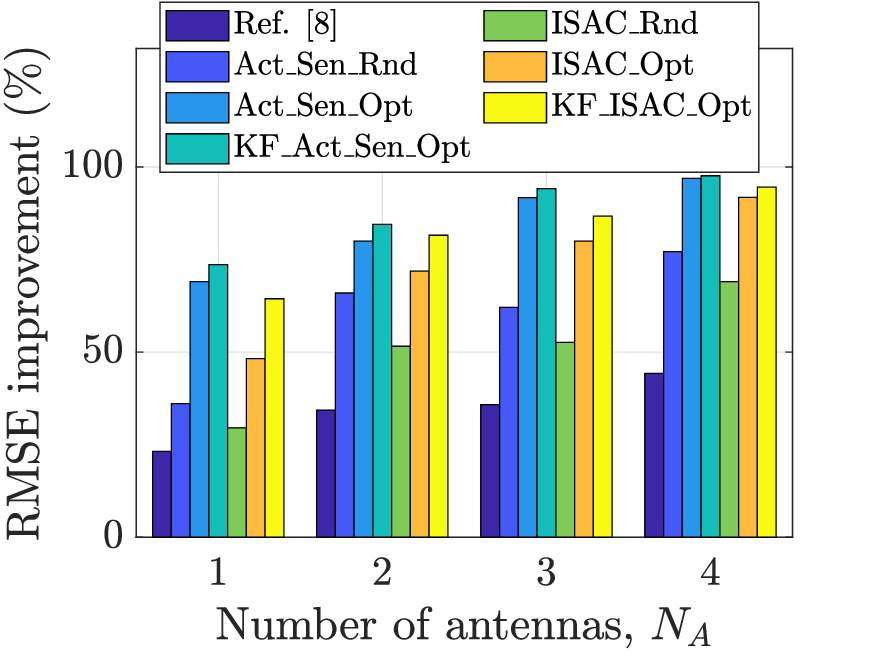}}\hspace{-2mm} \!\!\!\!
		\subfigure[]{\includegraphics[width=0.26\textwidth]{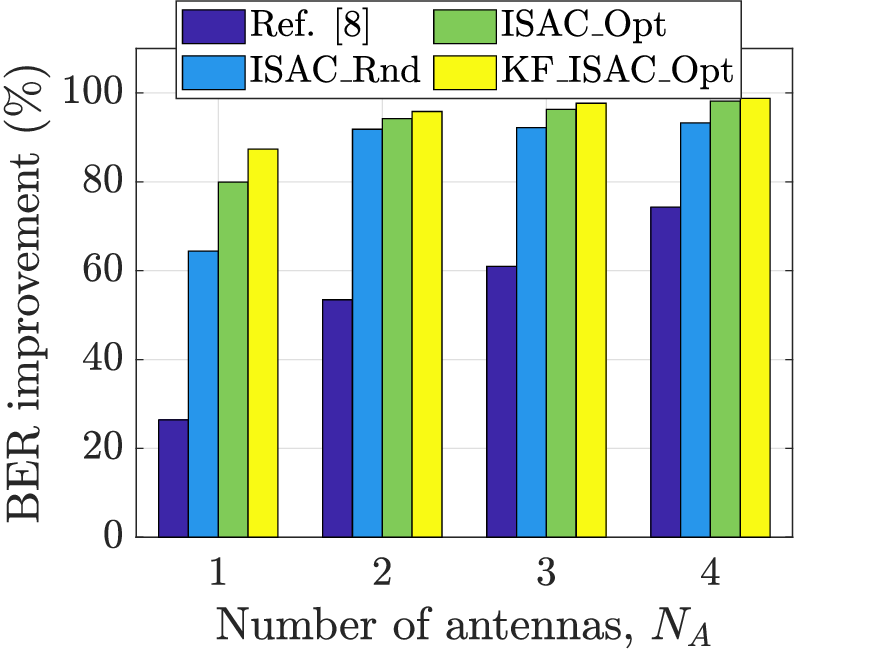}}\!\!\!\!
		\caption{Performance comparison of different schemes.}
		\label{fig:Scheme_Comp}\vspace{-5mm}
	\end{figure}

Fig.~\ref{fig:Scheme_Comp} compares the sensing RMSE and BER performance of the proposed multistatic OTFS-based schemes relative to the monostatic OTFS baseline under identical system parameters. The active sensing schemes include Act\_Sen\_Rnd (random deployment), Act\_Sen\_Opt (geometry-aware deployment), and KF\_Act\_Sen\_Opt (KF-assisted tracking with optimized deployment), while ISAC\_Rnd, ISAC\_Opt, and KF\_ISAC\_Opt denote their passive ISAC counterparts. As shown in Fig.~\ref{fig:Scheme_Comp}(a), all multistatic schemes outperform the monostatic baseline, demonstrating the fundamental gains from spatial diversity. While Act\_Sen\_Rnd improves RMSE through bistatic observations, its performance is constrained by non-ideal geometry. Act\_Sen\_Opt achieves substantially lower RMSE by maximizing the triangulation area, and KF\_Act\_Sen\_Opt further reduces RMSE by exploiting temporal correlation in target dynamics, yielding the best sensing performance across all antenna configurations. Similar trends are observed in ISAC: ISAC\_Rnd offers marginal gains, whereas ISAC\_Opt and KF\_ISAC\_Opt achieve pronounced RMSE reductions, with KF\_ISAC\_Opt consistently providing the lowest error, particularly at larger antenna counts. Fig.~\ref{fig:Scheme_Comp}(b) reports BER performance at an SNR of $8$~dB. For $N_A=1$, ISAC\_Rnd, ISAC\_Opt, and KF\_ISAC\_Opt reduce the BER from $1.7284$ to $0.6152$, $0.3472$, and $0.2182$, corresponding to improvements of $64.41\%$, $79.91\%$, and $87.37\%$, respectively. When the antenna count increases to $N_A=4$, the BER further decreases to $0.0920$, $0.0652$, and $0.0471$, yielding improvements of $93.96\%$, $98.14\%$, and $98.78\%$, respectively. The performance gap with increasing $N_A$ highlights the superior scalability of KF\_ISAC\_Opt. Overall, geometry-aware multistatic OTFS--ISAC with KF-assisted tracking provides the most significant gains over monostatic sensing and communication, making it well suited for multi-antenna vehicular networks.


\vspace{-2mm}
\section{Conclusion}
This work investigated a multistatic OTFS-based ISAC system for vehicular networks, addressing target localization, velocity estimation, and data communication under high-mobility conditions. By integrating cooperative triangulation with a CRW-based Kalman filtering framework, the system enables unified sensing, tracking, and communication using low-complexity nodes. The results reveal that localization accuracy is governed by triangulation area, and that orthogonal receiver deployment achieves error performance. Furthermore, an equivalent multi-antenna interpretation shows that increasing independent spatial observations yields cubic-order reduction in localization error. In joint passive sensing and communication, improved sensing fidelity enhances communication reliability, confirming the effectiveness of the multistatic OTFS-ISAC framework for vehicular environments.

\vspace{-2mm}

\makeatletter
\renewenvironment{thebibliography}[1]{%
	\@xp\section\@xp*\@xp{\refname}%
	\normalfont\footnotesize\labelsep .5em\relax
	\renewcommand\theenumiv{\arabic{enumiv}}\let\p@enumiv\@empty
	\vspace*{-1pt}
	\list{\@biblabel{\theenumiv}}{\settowidth\labelwidth{\@biblabel{#1}}%
		\leftmargin\labelwidth \advance\leftmargin\labelsep
		\usecounter{enumiv}}%
	\sloppy \clubpenalty\@M \widowpenalty\clubpenalty
	\sfcode`\.=\@m
}{%
	\def\@noitemerr{\@latex@warning{Empty `thebibliography' environment}}%
	\endlist
}
\makeatother     

\bibliography{references_COMML}

\end{document}